%%%%%%%%%%%%%%%%%%%%%%%%%%%%%%%%%%%%%%%%%%%%%%%%%%%%%%%%%%%%%%%%%%
%%%%%%%% ICML 2015 EXAMPLE LATEX SUBMISSION FILE %%%%%%%%%%%%%%%%%
%%%%%%%%%%%%%%%%%%%%%%%%%%%%%%%%%%%%%%%%%%%%%%%%%%%%%%%%%%%%%%%%%%

% Use the following line _only_ if you're still using LaTeX 2.09.
%\documentstyle[icml2015,epsf,natbib]{article}
% If you rely on Latex2e packages, like most moden people use this:
\documentclass[10pt,english,twocolumn]{article}

% use Times
\usepackage{times}
% For figures
\usepackage{graphicx} % more modern
\usepackage{subfigure}

\usepackage[top=1in, bottom=1in, left=0.75in, right=0.75in]{geometry}

% For citations
\usepackage[round]{natbib}

% For algorithms
\usepackage{algorithm}
\usepackage{algorithmic}

% As of 2011, we use the hyperref package to produce hyperlinks in the
% resulting PDF.  If this breaks your system, please commend out the
% following usepackage line and replace \usepackage{icml2015} with
% \usepackage[nohyperref]{icml2015} above.
\usepackage{hyperref}

% Packages hyperref and algorithmic misbehave sometimes.  We can fix
% this with the following command.

%
%\usepackage{leading}
%\leading{11pt}

% Employ the following version of the ``usepackage'' statement for
% submitting the draft version of the paper for review.  This will set
% the note in the first column to ``Under review.  Do not distribute.''
%\usepackage{icml2015} 

% Employ this version of the ``usepackage'' statement after the paper has
% been accepted, when creating the final version.  This will set the
% note in the first column to ``Proceedings of the...''
%\usepackage[accepted]{icml2015}

% My packages.
\usepackage{amsmath}

\usepackage[final=true]{daranki}

\usepackage{paralist}

\usepackage{wrapfig}

\usepackage{epstopdf}

\usepackage{multirow}

\usepackage[]{mcode}

\usepackage{array}

\usepackage[capitalize,noabbrev]{cleveref}

\usepackage{tikz}

\usepackage{lipsum}

\usepackage{authblk}

% my macros
\newcolumntype{L}[1]{>{\raggedright\let\newline\\\arraybackslash\hspace{0pt}}m{#1}}

\lstset{breaklines,breakatwhitespace}
\newcommand{\myverb}{\lstinline[breaklines,keepspaces,columns=fullflexible,postbreak=]}

% just to fix the texmaker insanity.
\comment{
\begin{lstlisting}
\end{lstlisting}
}

% The \icmltitle you define below is probably too long as a header.
% Therefore, a short form for the running title is supplied here:
\title{Private Disclosure of Information in Health Tele-monitoring}

\author{Daniel Aranki}
\author{Ruzena Bajcsy}
\affil{Electrical Engineering and Computer Science Department,\\
University of California, Berkeley,\\
Berkeley, CA 94720 USA\\
\{daranki, bajcsy\}@eecs.berkeley.edu
}
\date{}

\declaretheoremstyle[
notefont=\normalfont, notebraces={}{},
bodyfont=\normalfont\itshape,
headformat=\NAME~\NUMBER \NOTE
]{nopar}
\declaretheorem[style=nopar,sibling=lemma,name={Lemma}]{lemma2}
\declaretheorem[style=nopar,sibling=definition,name={Definition}]{definition2}

\newcounter{desccount}
\newcommand{\descitem}[1]{%
  \item[#1] \refstepcounter{desccount}\label{desc:#1}
}
\newcommand{\descref}[1]{(\hyperref[desc:#1]{\textbf{#1}})}

\renewcommand{\pr}[1]{\textbf{Pr}\left\{ {#1} \right\}}
\renewcommand{\pr}[1]{\m{\mathbb{P}}{#1}}

\epstopdfDeclareGraphicsRule{.tif}{png}{.png}{convert #1 \OutputFile}
\AppendGraphicsExtensions{.tif}

\begin{document} 

\maketitle

\begin{abstract} 

We present a novel framework, called Private Disclosure of Information (PDI), which is aimed to prevent an adversary from inferring certain sensitive information about subjects using the data that they disclosed during communication with an intended recipient. We show cases where it is possible to achieve perfect privacy regardless of the adversary's auxiliary knowledge while preserving full utility of the information to the intended recipient and provide sufficient conditions for such cases. We also demonstrate the applicability of PDI on a real-world data set that simulates a health tele-monitoring scenario.
\end{abstract} 

\section{Introduction}
\label{sec:introduction}

\defcitealias{nahdo1996guide}{NAHDO, 1996}
\defcitealias{california2014california}{OSHPD, 2014}

Data collection and sharing is growing to unprecedented volumes. Some of the reasons for this phenomenon include the decrease in storage cost, the rise of social networks, the ubiquity of smartphones and law regulations. For example, in many states in the US, medical institutions are obliged to make demographics data public about their patients~(\citetalias{nahdo1996guide}; \Citealp{sweeney2002k}; \citetalias{california2014california}).

\citet{warner1965randomized} argues that the lack of privacy guarantees can cause subjects to be reluctant to share their data with data collectors (such as doctors, government agencies, researchers, etc.) or even result in subjects providing false information. Therefore, subjects need to be assured that their privacy will be preserved throughout the whole process of data collection and use.

One of the emerging areas with growing interest to collect sensitive personal and private data is health tele-monitoring. In this setting, a technology is used to collect health-related data about patients, which are later submitted to a medical staff for monitoring. The data are then used to assess the health status of patients and provide them with feedback and/or intervention. Research indicates that such technologies can improve readmission rates and lower overall costs~\citep{clark2007telemonitoring, chaudhry2010telemonitoring, inglis2010structured, giamouzis2012telemonitoring, aranki2014continouos}. In such scenarios, the collected data are usually of sensitive nature from a privacy point of view and therefore privacy preserving technologies are needed in order to protect patients' privacy and increase compliance.

There are multiple stages in the life-cycle of data, including
\begin{inparaenum}[\itshape i\upshape)]
\item the disclosure (or submission) of the data by the subjects to the data collector;
\item the processing of the data;
\item the analysis; and/or
\item the publishing of (often a privatized version of) the data or some findings based on them.
\end{inparaenum}
In this paper we focus on the phase of disclosure of privacy-sensitive data by the data owners. Our framework for Private Disclosure of Information (PDI) is thus aimed to prevent an adversary from inferring certain sensitive information about the subject using the data that were disclosed during communication with an intended recipient. This is analogous to the problem of attribute linkage in statistical database privacy.

In traditional encryption approaches to maintaining privacy, it is often implicitly assumed that the data themselves \emph{are} the private information. However, in more general scenarios, the data \emph{can be used} to infer some private information about the subjects for which the data apply. For example, respiration rate by itself might not be considered private information. However, if the data from the collected respiration rate are used to infer whether the individual is a smoker or not, they become sensitive information. One can argue that because the information about whether someone smokes is private, the respiration rate data become private \emph{by implication}.

Under such circumstances, one should attempt to privatize the transmitted data in a way that reveals as little as possible about the private information to an adversary. In summary, our objective is to encode the transmitted data in order to hide another private piece of information. In the words of~\citet{sweeney2002k}: ``Computer security is not privacy protection." The converse is also true, privacy does not replace security. Our approach is therefore to be viewed as complementary to classical security approaches. For example, data can be privatized then encrypted.

The rest of this paper is organized as follows. In~\cref{sec:related_work} we provide a survey of the literature for related work. In~\cref{sec:problem_formulation} we provide the motivation to the problem and formulate it, followed by further analysis in~\cref{sec:further_analysis}. We then discuss implementation details of the learning problem in~\cref{sec:implementation} followed by experimental results in~\cref{sec:experiment}. Finally, we close by discussing our conclusions and future research directions in~\cref{sec:discussion}.

\section{Related Work}
\label{sec:related_work}

The study of privacy-preserving techniques and technologies in the fields of statistics, computer security and databases, and their intersections, dates back to at least~\citeyear{warner1965randomized} when~\citeauthor{warner1965randomized} proposed a randomization technique for conducting surveys and collecting responses for the purpose of statistical and population analysis. Since then, extensive privacy research in the fields above was conducted. Therefore, in the interest of brevity, we provide a brief overview of the areas of study related to our work and refer the reader to more comprehensive surveys in each area.

Recently, attention to privacy has been rising in the health-care domain with the spread of electronic health-records usage and the growing data sharing between medical institutions. It has been reported that consumers are expressing increasing concerns regarding their health privacy~\citep{bishop2005national,hsiao2012use}. Most of the research in privacy from the health community focuses on medical data publishing and is therefore database-centric. For a survey of results in this domain, we refer the reader to~\citep{gkoulalas2014publishing}.

In more general-purpose scenarios, the privacy of statistical databases and data publishing has been extensively studied. \citet{denning1983inference} presented some of the early threats related to inference in statistical databases and reviewed controls that are based on the lattice model~\citep{denning1976lattice}. \citet{duncan1989risk,duncan1986disclosure} studied methods for limiting disclosure and linkage risks in data publishing. \citet{sandhu1993lattice} provided a tutorial on lattice-based access controls for information flow security and privacy. Later, \citet{farkas2002inference} provided a survey of more results in the field of access controls to the inference problem in database security. For rigorous surveys in the fields of data publishing privacy and statistical databases privacy, we refer the reader to~\citep{adam1989security, fung2010privacy}.

Two semantic models of database privacy of growing interest in the privacy literature are $k$-anonymity~\citep{sweeney2002k} and differential privacy~\citep{dwork2006differential,dwork2008differential}. In $k$-anonymity, given a set of quasi-identifiers that can be used to re-identify subjects, a table is called $k$-anonymous if every combination of quasi-identifiers in the table appears in at least $k$ records. If a table is $k$-anonymous, assuming each individual has a single record in the table, then the probability of linking a record to an individual is at most $1/k$. Other extensions and refinements of $k$-anonymity have been proposed including $l$-diversity~\citep{machanavajjhala2007ldiversity}, $t$-closeness~\citep{li2007tcloseness} and others.

In differential privacy, the requirement is that the output of a statistical query should not be too sensitive to any single record in the database. Formally, given a statistical query $M$, then $M$ is $\epsilon$-differentially private if $\pr{\m{M}{D_1} \in S} \leq e^\epsilon \times \pr{\m{M}{D_2} \in S}$ for any two realizations $D_1$ and $D_2$ of the database such that $|D_1 \Delta D_2| = 1$ and all $S \subset Range(M)$, where $D_1 \Delta D_2$ is the symmetric difference between $D_1$ and $D_2$~\citep{dwork2006differential,dwork2008differential}. \citet{cormode2011personal} showed that sensitive attribute inference can be done on databases that are differentially private and $l$-diverse with similar accuracy.

As can be seen from the review above, most of the research in data-privacy is focused on privacy-preserving data publishing and privacy-preserving statistical databases. In contrast, in this work we focus on preventing adverserial statistical inference of a piece of private information based on the disclosed messages in an individual's information exchange scenario during communication.

%communication scenario where there is an individual information exchange.

\section{Problem Formulation}
\label{sec:problem_formulation}

\subsection{Notation}

We use the following shorthand notation for probability density (mass) functions. We always use a pair of a capital and a small symbols of the same letter for a random variable and a realization of it, respectively. For notation simplicity and conciseness, given random variables $X$ and $Y$, instead of writing $p_X(x)$ for the marginal density (mass) function of $X$ we simply write $p(x)$, and instead of writing $p_{X|Y}(x|y)$ for the conditional density (mass) function of $X$ given $Y$, we simply write $p(x|y)$.

\subsection{Motivation and Threat Model}

We are primarily motivated by the tele-monitoring setting. In this setting, a doctor wishes to monitor her patients remotely using a technology that can collect and transmit health-related data. The shared data are of sensitive nature because they can be used to infer private pieces of information like a health-condition or a disease. For example, updates about a patient's weight can lead to disclosure of obesity as it will be demonstrated in~\cref{sec:experiment}.

More generally, an information provider Bob wants to disclose a piece of information $x$ to some recipient Alice. Furthermore, the information $x$ can be used to infer some private information $c$ about Bob. However, there is no guarantee that the transmitted information will not be intercepted and potentially used for inference of the private information $c$ about Bob by an untrusted but passive eavesdropper Eve. Finally, in this setting, we assume that Alice is more certain about $c$ than Eve is. The problem at hand is delivering the information $x$ under these circumstances such that Alice can make full use of the information but that Eve's ability to infer $c$ about Bob, using the transmitted message, is minimized.

\begin{figure}[t]
%\begin{wrapfigure}{l}{.25\textwidth}
%\vspace{.1in}
\centerline{
\scalebox{.7}{
\begin{tikzpicture}[->,>=latex,shorten >=1pt,auto,node distance=2.5cm,
  thick,normal node/.style={circle,draw,font=\sffamily\Large\bfseries},observed node/.style={circle,fill=blue!20,draw,font=\sffamily\Large\bfseries}]
%\draw
%{
%(0,0) node[draw, circle] {$S$} -> (1.5,0) node[draw, circle] {$X$}
%};
\node[normal node] (s) {$S$};
\node[normal node] (c) [right of=s] {$C$};
\node[normal node] (x) [below of=c] {$X$};
\node[normal node] (z) [right of=c] {$Z$};

\path[every node/.style={font=\sffamily\small}]
(s) edge node [left] {} (c)
	edge node [left] {} (x)
(c) edge node [left] {} (z)
	edge node [left] {} (x)
(x) edge node [left] {} (z);
\end{tikzpicture}
}
}
%\vspace{.1in}
\caption{The Graphical Model of PDI} \label{fig:graphical-model}
%\end{wrapfigure}
%\vspace{-.15in}
\end{figure}
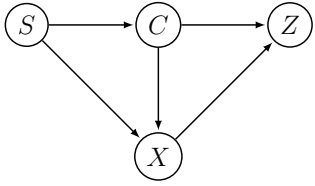

As a concrete example, consider the following scenario in health tele-monitoring. A patient Bob is trying to update his physician Alice about his weight and body mass index (BMI).\footnote{BMI is a measure of relative weight based on an individual's mass and height. Defined as $BMI \triangleq \frac{mass(kg)}{height(m)^2}$.} Since Alice is Bob's physician, she already knows the weight status category of Bob which he considers to be private information.\footnote{Weight status category indicates if an individual is underweight, overweight, obese or has a healthy weight.\comment{ This notion will be presented formally in~\autoref{sec:experiment}}} Eve, however, does not know Bob's weight status category \emph{a priori} but would like to learn it from the messages he sends to Alice. If Eve succeeds to listen in on the communication between Bob and Alice, Eve can, with some accuracy, infer the weight status category of Bob. Alice, being a considerate physician, wants to ensure the privacy of her patients. Alice decides to create an encoding scheme (that can be made public) for the communication such that the encoding is different per weight status group. Her objective is to make this encoding scheme ``as privacy-preserving as possible" in the sense of keeping her patients' weight status category information as private as possible to someone who does not know it a priori.

It is important to compare this scenario with the classical security approach. In classical security, the objective is to protect the transmitted message itself without taking into consideration an adversarial effort to statistically infer private information using the cipher-text. It has been demonstrated that statistical inference can still be performed on encrypted data~\citep[For example][]{white2011phonotactic, miller2014know}. We complement this by capturing the notion of statistical inference of the private information $c$ from the transmitted data, and aim to find a way to minimize the ability of an adversary to infer $c$ using the transmitted data.

\subsection{Problem Definition}

Towards a more formal representation of the problem, we consider scenarios where
\begin{inparaenum}[\itshape i\upshape)]
\item \label{test} Bob's identity, $s$, is attached to any message that is sent by him;
\item there is no guarantee that the sent information will not be intercepted by an untrusted but passive eavesdropper Eve;
\item the information $x$ can be used to infer some private information $c$ about Bob; and
\item Alice knows the private information $c$ about Bob but Eve does not.
\end{inparaenum}
Under these assumptions, Bob would like to exploit the fact that Alice knows $c$ but Eve does not in order to send a message $z$ that is more useful to Alice than Eve. The utility value of the message follows the following decoding and ``hiding class" (HC) premises:
\begin{description}
\descitem{DECODING} Alice can make full use of the sent information $z$, i.e. obtain the original message $x$ from the transmitted message $z$; and
\descitem{HC} Eve's ability to make inference about $c$ given $s$, based on the sent information $z$ is minimized.
\end{description}

Formally, we use $\mathcal{S}$ for the set of identifiers of information providers, $\mathcal{I}$ for the information space and $\Sigma$ for the set of private classes (the private information about the information providers). Similarly, we define the random variables $S$ for the identifier of the information provider, $X$ for the piece of information that the provider would like to disclose, $C$ for the class that the provider belongs to and $Z$ for the encoded message that will be sent (called \emph{privatized information}), which is a function of the original information and the class. We call this function a \emph{privacy mapping function} and define it as $R : \Sigma \rightarrow \mathcal{I}^{\underline{\mathcal{I}}}$ where $\mathcal{I}^{\underline{\mathcal{I}}}$ is the set of injective functions $\mathcal{I} \rightarrow \mathcal{I}$. A simple way to think about $R$ is as an encoding scheme. That is, for every class $c \in \Sigma$, it outputs an encoding function for the input information $x$. Given $c \in \Sigma$, since $\m{R}{c}$ is injective, then there exists a left inverse $\m{R^l}{c}$ which will be used to decode the messages $z$ sent from subjects in class $c$.\footnote{We say that $g : D_2 \rightarrow D_1$ is a left inverse of a function $f : D_1 \rightarrow D_2$ if for all $x \in D_1$ we have $\m{g}{\m{f}{x}} = x$.} From that, $Z$ is simply equal to $\left[ R ( C ) \right](X)$. The statistical model that relates these random variables is described in~\cref{fig:graphical-model}.

For conciseness, in this paper we treat the case of continuous information spaces. Note that in the case of a discrete information space, the reader is instructed to follow the discussion by substituting probability density functions with probability mass functions for the distributions of $X$ and $Z$. Note that our treatment also covers the case of information spaces of mixed nature (that are discrete in some attributes and continuous in others) by using the appropriate probability distribution functions.

For the model in~\cref{fig:graphical-model}, one needs to supply the following probability distributions. $p(s)$, the prior of subjects transmitting messages in the system. $p(c|s)$, the adversary's prior of class membership for the different subjects (based on auxiliary knowledge). $p(x|c,s)$, the generative model of data given a class and a subject. Finally, $p(z|x,c)$ is simple and can be modeled as $\pr{Z=z|X=x,C=c} = 1$ if and only if $z = \left[R(c)\right](x)$ and $0$ otherwise, for all $z,x \in \mathcal{I}$ and $c \in \Sigma$.

Recall that the identity $s$ of the information provider is attached with the transmitted message. Moreover, the intended recipient knows the class $c$ of the information provider. Therefore, because of the injectivity requirement of the privacy mapping function, the intended recipient can decode the sent information $z$ back to the original message $x$. Hence the requirement \descref{DECODING} is satisfied.

Finally, in order to satisfy the second requirement \descref{HC} we would like to find a privacy mapping function $R$ that minimizes the amount of information that the privatized information $Z$ carries for the sake of inferring the private class $C$, given the subject identifier $S$, to an adversary. We adopt the measure of \emph{(conditional) mutual information} to model this quantity. We present the definition of conditional mutual information for continuous random variables, and refer the reader to~\citep[Definitions 2.61 and 8.54]{cover2006elements} for the corresponding definitions concerning discrete random variables and random variables that can be mixtures of discrete and continuous, respectively.

\newcommand{\mutualinfdef}{\citep[c.f. Definition 8.49]{cover2006elements}}
\begin{definition2}[\mutualinfdef] \label{eq:mi_def}
Let $X,Y$ and $Z$ be random variables\comment{ with a joint probability density function $f_{X,Y}(x,y)$ and marginal probability density functions $f_X(x)$ and $f_Y(y)$, respectively}. The \emph{conditional mutual information of $X$ and $Y$ given $Z$}, $I(X,Y|Z)$, is defined as
\comment{I(X,Y|Z) \triangleq E_{p_{X,Y,Z}(x,y,z)}\left[\log \frac{p_{X,Y|Z}(x,y|z)}{p_{X|Z}(x|z) p_{Y|Z}(y|z)}\right]}
$$
I(X,Y|Z) \triangleq E_{p(x,y,z)}\left[\log \frac{p(x,y|z)}{p(x|z) p(y|z)}\right]
$$
\end{definition2}

Intuitively, $I(Z,C|S;R)$ measures in bits, the expected amount of mutual information that the random variables $Z=\left[R(C)\right](X)$ and $C$ have, given the information in $S$.\footnote{The units are bits assuming the $\log$ base in~\cref{eq:mi_def} is $2$.} Mutual information also provides a sufficient and necessary condition for conditional independence as follows.
\def\currlemmaref{\citep[c.f. Corollary 2.92; c.f. Theorem 8.6.1]{cover2006elements}}
\begin{lemma2}[\currlemmaref] \label{lemma:mi_and_indep}
$I(Z,C|S;R) \geq 0$ for any privacy mapping function $R$. Furthermore, $I(Z,C|S;R) = 0$ if and only if $Z$ and $C$ are conditionally independent given $S$ using the privacy mapping function $R$.
\end{lemma2}

From the intuition above, and the fact in~\cref{lemma:mi_and_indep}, we set our objective to find a privacy mapping function $R$ that minimizes the conditional mutual information of the privatized information $Z$ and the private class $C$ given the identity of the information provider $S$ such that the model in~\cref{fig:graphical-model} holds. In short,
\begin{alignat*}{3}
R^* = &\argmin_{R}
& & I(Z,C \vert S; R) \numberthis \label{eq:main} \\
& \text{subject to }
& & R\text{ is a privacy mapping function} \\
& \text{and } & & \text{Model in~\cref{fig:graphical-model}}\\
\end{alignat*}

\vspace{-.25in}
Once a privacy mapping function $R$ is chosen, the communication process can be carried as follows.
\begin{description}
\item[Sending] The transaction of disclosing a piece of information $x \in \mathcal{I}$ by an information provider belonging to class $c \in \Sigma$ is performed by applying the following transformation $z \leftarrow \left[R(c)\right](x)$ and sending $z$ (or some encrypted version of it).
\item[Receiving] The transaction of receiving a piece of information $z \in \mathcal{I}$ sent by an information provider belonging to class $c \in \Sigma$ is performed by applying $x \leftarrow \left[R^l(c)\right](z)$. Where $R^l(c)$ is a left inverse of $R(c)$.
\end{description}

\comment{Because of the injectivity requirement for the privacy mapping function and the assumption that the intended recipient knows the class $c \in \Sigma$ to which the sender belongs, the process defined above allows the intended recipient to decode the transmitted message successfully, satisfying our first requirement in the problem definition.\todo[maybe give the requirements/assumptions labels?]}

Note that the problem in~\cref{eq:main} is not a convex problem. Furthermore, it is of interest to study how to learn the model in~\cref{fig:graphical-model} and find an optimal privacy mapping function $R$ from data. We will address this question in~\cref{sec:implementation}, but first we further study the properties of the formulated framework in the following section.

\comment{\begin{minipage}{\linewidth}
\begin{lstlisting}[label=lst:opt_complete,caption=Finding $\mathcal{R}(\cdot;\Theta)$ based on the model in~\cref{fig:graphical-model} by satisfying Condition~(),mathescape,frame=B]
Input: $\mathcal{I}$: Information space
Input: $\mathcal{S}$: Information providers set
Input: $\Sigma \triangleq \Sigma_{\text{intendend}}$: Provider class space
Input: $\mathcal{I_D} \subset \mathcal{I}^{\underline{\mathcal{I}}}$: Parametric search subspace
Input: $p(C|S), p(X|C,S), p(S)$: Model of the adversary
Output: $R : \Sigma \rightarrow \mathcal{I_D}$
minimize $\mathbb{E}_{p(Z,S)}\left[D_{KL}\left( p(C|Z,S) || p(C|S) \right)\right](\Theta)$
w.r.t	$\Theta$
s.t.	$R(\cdot;\Theta) \in \left(\Sigma \rightarrow \mathcal{I_D}\right)$
	$\forall c \in \Sigma, x\in \mathcal{I},z \in \mathcal{I}: p(Z=z|X=x,C=c) = \delta( \left[R(c;\Theta)\right](x) - z)$
	$p(C|Z,S) = \frac{\int_x \left[p(Z=z|X=x,C=c)\cdot p(X=x|C=c,S=s)\right]dx \cdot p(C=c|S=s)}{\sum_{\bar{c}} \int_x \left[p(Z=z|X=x,C=\bar{c})\cdot p(X=x|C=\bar{c},S=s)\right]dx \cdot p(C=\bar{c}|S=s)}$
	$p(Z=z,S=s) = ()$
\end{lstlisting}
\comment{s.t. 		$\forall c \in \Sigma : R(c;\Theta) \in \mathcal{I_D}$}
\end{minipage}}

\section{Further Analysis}
\label{sec:further_analysis}

First, we relate the value of the objective function in~\cref{eq:main} to Bayesian inference in the following lemma.

\begin{lemma}
If a privacy mapping function $R$ yields $I(Z,C|S;R) = 0$ then Bayesian inference of $C$ based on $Z$ is prevented for the adversary.
\end{lemma}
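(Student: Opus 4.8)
The plan is to make precise what it means for Bayesian inference of $C$ to be ``prevented,'' and then reduce the claim to the conditional-independence characterization already established in~\cref{lemma:mi_and_indep}. The natural formalization is that, from the adversary's point of view, observing the transmitted message $z$ (together with the known identity $s$) does not change her belief about the private class: the Bayesian posterior $p(c \mid z, s)$ coincides with her prior $p(c \mid s)$ for every class $c$ and (almost) every $z$ in the support of $Z$ given $s$. Under this reading, ``prevention'' means exactly that the message $Z$ carries no inferential leverage on $C$ beyond what $S$ already supplies.

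First I would invoke~\cref{lemma:mi_and_indep}: since $R$ yields $I(Z,C \mid S; R) = 0$, the random variables $Z$ and $C$ are conditionally independent given $S$. Written in terms of the joint conditional density (mass) function, this is the factorization $p(c, z \mid s) = p(c \mid s)\, p(z \mid s)$. Next I would translate this directly into a statement about the posterior: by the definition of conditional probability, wherever $p(z \mid s) > 0$ we have
$$
p(c \mid z, s) = \frac{p(c, z \mid s)}{p(z \mid s)} = \frac{p(c \mid s)\, p(z \mid s)}{p(z \mid s)} = p(c \mid s).
$$
Thus the posterior equals the prior, which is precisely the sense in which the message $z$ is useless to the adversary for inferring $C$ once $s$ is fixed. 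Since this holds for every $c$ and (almost) every observable $z$, Bayesian inference of $C$ based on $Z$ is prevented, as claimed.

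The main obstacle is not the algebra, which is a one-line consequence of the factorization, but rather fixing the correct definition of ``prevented'' and handling the continuous setting carefully. In particular, I would state the posterior identity as holding for $p(\cdot \mid s)$-almost every $z$, since the conditional density is only determined up to null sets, and note that on the set where $p(z \mid s) = 0$ the posterior is either undefined or irrelevant because such $z$ are observed with probability zero. With the definition pinned down this way, the equivalence supplied by~\cref{lemma:mi_and_indep} does all of the real work, and the lemma follows immediately.
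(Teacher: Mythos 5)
Your proposal is correct and follows essentially the same route as the paper: both invoke~\cref{lemma:mi_and_indep} to obtain conditional independence of $Z$ and $C$ given $S$ and then conclude that the posterior $p(c \mid z, s)$ equals the prior $p(c \mid s)$, so that observing $Z$ does not update the adversary's belief. Your version merely spells out the intermediate factorization and the almost-everywhere caveat that the paper leaves implicit.
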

\begin{proof}
From~\cref{lemma:mi_and_indep} we know that $Z$ is conditionally independent of $C$ given $S$ which means $p(c|z,s) = p(c | s)$ which is the prior of the class membership that the adversary already possesses. Therefore, the disclosure of $Z$ does not change the adversary's belief regarding the private information $C$ given the subject identifier $S$.
\end{proof}

The next question that we need to ask is whether a privacy mapping function $R$ satisfying $I(Z,C|S;R) = 0$ is ever attainable. There are three reasons for this question. First, if such a privacy mapping function $R$ exists, then it means that by knowing $S$ (which is always attached to the message), $Z$ provides no extra information to inferring $C$ to an adversary, which sounds surprising. Second, there is generally a trade-off between information utility and privacy where optimal privacy is usually only attained at the cost of no utility~\citep[][]{dwork2006differential}. In our case, the utility of the information $Z$ to the intended recipient is always fully preserved, unrelated of the choice of $R$, since $\m{R}{c}$ is injective for all $c \in \Sigma$. From this it follows that the scenario of perfect privacy seems to be unattainable.\footnote{We consider ``perfect privacy" to be that the adversary's belief about $C$ given $S$ doesn't change after observing $Z$.} Finally, if such a privacy mapping function $R$ exists, it would assure optimality of~\cref{eq:main}. Fortunately (and somewhat unintuitively), such a mapping function can be attained as shown in the following sequence of results.

\begin{lemma} \label{lemma:if_c_doesnt_matter_then}
If there exists a function $f(z,s)$ such that $p(z|c,s)=f(z,s)$ for all $c \in \Sigma, z\in \mathcal{I}$ and $s \in \mathcal{S}$ then $p(z|s)\equiv f(z,s)$
\end{lemma}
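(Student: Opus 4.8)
The plan is to establish the claim by a direct marginalization over the class variable $C$, i.e.\ an application of the law of total probability justified by the graphical model in~\cref{fig:graphical-model}. Since the conclusion is an identity between two functions of $(z,s)$, it suffices to show that for every fixed $s \in \mathcal{S}$ and $z \in \mathcal{I}$ the $S$-conditional marginal density $p(z|s)$ coincides with $f(z,s)$.

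First I would write $p(z|s)$ by marginalizing out the class and applying the chain rule, $p(z|s) = \sum_{c\in\Sigma} p(z,c|s) = \sum_{c\in\Sigma} p(z|c,s)\,p(c|s)$. This factorization is precisely the one licensed by the DAG: $C$ is a parent of $Z$ and $S$ is a common ancestor, so conditioning on $s$ and summing over the intermediate class $c$ recovers the $S$-conditional marginal of $Z$. I would state this step in the discrete-class setting that matches the weight-status examples, noting that the identical computation with an integral over $c$ covers a continuous class space.

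Next I would substitute the hypothesis $p(z|c,s)=f(z,s)$, which by assumption holds for every $c \in \Sigma$. Because $f(z,s)$ does not depend on the summation index $c$, it factors out of the sum, leaving $p(z|s) = f(z,s)\sum_{c\in\Sigma} p(c|s)$. Finally I would invoke the fact that $p(c|s)$ is a genuine conditional probability distribution over the class space, so that $\sum_{c\in\Sigma} p(c|s)=1$, which gives $p(z|s)=f(z,s)$ and hence the claimed identity.

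There is no serious obstacle here; the argument is essentially a one-line marginalization. The only points that warrant care are making explicit that the factorization $p(z,c|s)=p(z|c,s)\,p(c|s)$ together with the marginalization is consistent with the model of~\cref{fig:graphical-model}, and fixing the discrete/continuous convention for $C$ so that the normalization $\sum_{c\in\Sigma} p(c|s)=1$ is unambiguous. The substantive content is not in the proof itself but in its role downstream: it supplies the converse-style bridge from a class-independent channel $p(z|c,s)$ to conditional independence of $Z$ and $C$ given $S$, which by~\cref{lemma:mi_and_indep} is exactly the condition $I(Z,C|S;R)=0$ that the framework seeks.
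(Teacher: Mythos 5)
Your proof is correct and is essentially identical to the paper's: both marginalize $p(z|s)=\sum_{c\in\Sigma}p(z|c,s)\,p(c|s)$, substitute the hypothesis, pull $f(z,s)$ out of the sum, and use $\sum_{c}p(c|s)=1$. No further comment is needed.
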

\begin{proof}
$p(z|s) = \sum_{c \in \Sigma} p(z,c|s) = \sum_{c \in \Sigma} p(z|s,c)\cdot p(c|s) = \sum_{c \in \Sigma} f(z,s)\cdot p(c|s) = f(z,s)\cdot \sum_{c \in \Sigma} p(c|s) = f(z,s)$
\end{proof}

Using~\cref{lemma:if_c_doesnt_matter_then}, we prove the following theorem, which is a sufficient condition for optimality of~\cref{eq:main}.
\begin{theorem}\label{thm:sufficient_cond_opt}
If there exists a function $f(z,s)$ such that $p(z|c,s)=f(z,s)$ for all $c \in \Sigma, z\in \mathcal{I}$ and $s \in \mathcal{S}$ then $D_{KL}\left( p(c|z,s) || p(c|s) \right)=0$ for all $z \in \mathcal{I}$ and $s \in \mathcal{S}$.\footnote{\citep[Definition 8.46]{cover2006elements}: The Kullback-Leibler divergence is defined as $D_{KL}(p||q) = E_p \left[ \log \frac{p}{q} \right]$.}
\end{theorem}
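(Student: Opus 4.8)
The plan is to show that the stated hypothesis forces the posterior $p(c|z,s)$ to coincide exactly with the prior $p(c|s)$, after which the divergence vanishes by inspection. The whole argument reduces to a Bayes-rule cancellation, with \cref{lemma:if_c_doesnt_matter_then} supplying the crucial denominator.

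First I would invoke \cref{lemma:if_c_doesnt_matter_then}: since by hypothesis $p(z|c,s) = f(z,s)$ independently of $c$, the lemma yields $p(z|s) \equiv f(z,s)$ as well. Thus both the class-conditional density of $Z$ and its marginal given $S$ equal the \emph{same} function $f(z,s)$. Next I would expand the posterior through Bayes' rule, writing $p(c|z,s) = \frac{p(z|c,s)\,p(c|s)}{p(z|s)}$, and substitute the two identities just obtained. The factor $f(z,s)$ then appears in both numerator and denominator and cancels, giving $p(c|z,s) = p(c|s)$ for every $c \in \Sigma$ at each $(z,s)$ with $f(z,s) > 0$.

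Finally I would substitute this equality into the definition of the Kullback-Leibler divergence. Because $C$ ranges over the discrete class set $\Sigma$, the divergence is the sum $\sum_{c \in \Sigma} p(c|z,s) \log \frac{p(c|z,s)}{p(c|s)}$; since each ratio $p(c|z,s)/p(c|s)$ equals $1$, every logarithm is $\log 1 = 0$, so $D_{KL}\!\left( p(c|z,s) \,\|\, p(c|s) \right) = 0$ for all admissible $z$ and $s$, as claimed.

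The only point requiring care is the possibility that $f(z,s) = 0$, which would make the Bayes step divide by zero. But $f(z,s) = p(z|s) = 0$ means that $z$ is never observed given $s$, so such pairs $(z,s)$ carry zero probability and do not affect the statement; the divergence is evaluated only where $p(z|s) > 0$. Beyond this bookkeeping, I expect no substantive obstacle, as the result is essentially the single-line cancellation above combined with the standard fact that $D_{KL}$ vanishes precisely when its two arguments agree.
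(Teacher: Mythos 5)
Your proposal is correct and follows essentially the same route as the paper's own proof: invoke \cref{lemma:if_c_doesnt_matter_then} to identify $p(z|s)$ with $f(z,s)$, cancel via Bayes' rule to get $p(c|z,s)/p(c|s)=1$ wherever $f(z,s)\neq 0$, and conclude that the divergence vanishes. Your extra remark about the measure-zero set where $f(z,s)=0$ matches the paper's restriction to $f(z,s)\neq 0$ and adds nothing beyond careful bookkeeping.
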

\begin{proof}
Since $p(z|c,s)=f(z,s)$ then using~\cref{lemma:if_c_doesnt_matter_then} we know that $p(z|s)\equiv f(z,s)$. Therefore, for any $z \in \mathcal{I}$ and $s \in \mathcal{S}$ such that $f(z,s)=p(z|c, s)=p(z|s)\neq 0$ we get $\frac{p(c|z,s)}{p(c|s)}=\frac{p(z|c,s)\cdot p(c|s)}{p(c|s) \cdot p(z|s)}=\frac{p(z|c,s)}{p(z|s)}=1$. This implies $D_{KL}\left( p(c|z,s) || p(c|s) \right)=0$.
\end{proof}

\begin{corollary}\label{cor:diff_leading_to_optimal}
If a privacy mapping function $R$ achieves $p(z|c,s)=f(z,s)$ for some function $f(z,s)$, for all $c \in \Sigma, z\in \mathcal{I}$ and $s \in \mathcal{S}$ then $R$ is the optimal solution to~\cref{eq:main}.
\end{corollary}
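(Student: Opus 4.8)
The plan is to show that the hypothesized $R$ drives the objective $I(Z,C|S;R)$ all the way down to zero, which by \cref{lemma:mi_and_indep} is the smallest value it can ever take; hence such an $R$ must be a global minimizer of \cref{eq:main}. Since the corollary supplies $R$ as a privacy mapping function inducing the model of \cref{fig:graphical-model}, it is automatically feasible for \cref{eq:main}, so it suffices to certify that it attains the minimal objective value. The bridge between the already-proved \cref{thm:sufficient_cond_opt} (phrased in terms of KL divergence) and the objective (phrased in terms of conditional mutual information) is the standard decomposition of conditional mutual information as an expected KL divergence.

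Concretely, first I would rewrite the objective by applying Bayes' rule inside the logarithm in \cref{eq:mi_def}. With the role assignment $X\mapsto Z$, $Y\mapsto C$, $Z\mapsto S$, the integrand $\frac{p(z,c|s)}{p(z|s)\,p(c|s)}$ simplifies to $\frac{p(c|z,s)}{p(c|s)}$, giving
\[
I(Z,C|S;R) = E_{p(z,c,s)}\left[\log \frac{p(c|z,s)}{p(c|s)}\right].
\]
Next I would factor the joint expectation as $E_{p(z,s)}\big[E_{p(c|z,s)}[\,\cdot\,]\big]$, which collapses the inner expectation into exactly a Kullback--Leibler divergence and yields the identity
\[
I(Z,C|S;R) = E_{p(z,s)}\left[D_{KL}\big(p(c|z,s)\,\|\,p(c|s)\big)\right].
\]

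With this identity in hand the corollary is immediate: the hypothesis of the corollary coincides with the hypothesis of \cref{thm:sufficient_cond_opt}, so that theorem guarantees $D_{KL}\big(p(c|z,s)\,\|\,p(c|s)\big)=0$ for every $z\in\mathcal{I}$ and $s\in\mathcal{S}$. The integrand above therefore vanishes identically, forcing $I(Z,C|S;R)=0$. Because \cref{lemma:mi_and_indep} establishes $I(Z,C|S;R')\ge 0$ for every privacy mapping function $R'$, the value $0$ attains the minimum of \cref{eq:main}, and so the given $R$ is optimal.

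I expect the only real subtlety to be the first step, namely justifying the decomposition of the conditional mutual information as an expected KL divergence in the continuous (and mixed) setting, where one must check that the Bayes-rule manipulation and the interchange of the order of integration are valid on the support where the densities are nonzero. This is precisely the set on which \cref{thm:sufficient_cond_opt} already operates, since that result excludes points at which $f(z,s)=0$; the two steps therefore dovetail cleanly, and no measure-theoretic obstruction arises beyond the conventions already adopted for \cref{eq:mi_def}.
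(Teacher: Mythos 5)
Your proposal is correct and follows essentially the same route as the paper: the paper's proof likewise rests on the identity $I(Z,C|S;R) = E_{p(z,s)}\left[ D_{KL}\left(p(c|z,s)\,\|\,p(c|s)\right)\right]$ together with \cref{thm:sufficient_cond_opt}, and your derivation of that identity via Bayes' rule and iterated expectation is the standard one the authors implicitly rely on. The only addition you make is the explicit appeal to \cref{lemma:mi_and_indep} to certify that the value $0$ is indeed the global minimum, which the paper leaves unstated but is a reasonable finishing touch.
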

\begin{proof}
The result follows from~\cref{thm:sufficient_cond_opt} and the fact that $I(Z,C|S;R) = E_{p(z,s)}[ D_{KL}(p(c|z,s;R)||p(c|s;R))]$.
\end{proof}

\comment{\Cref{thm:sufficient_cond_opt} is a valuable tool for proving optimality of privacy mapping functions. }Note that~\cref{thm:sufficient_cond_opt} is independent of the model of $p(c|s)$ (and $p(s)$). This is a very important observation since it means that in cases where a privacy mapping function $R$ satisfies the condition of the theorem, modeling the adversary's prior knowledge about information providers' class memberships is not needed. Furthermore, such privacy mapping function achieves perfect privacy against any adversary, regardless of her auxiliary knowledge $p(c|s)$ (or $p(s)$). In the following theorems we provide examples of using~\cref{thm:sufficient_cond_opt} that also serve as cases where such privacy mapping functions are attainable.

\begin{theorem} \label{thm:normal_opt_sol}
If $X|C=c,S=s \sim N(\mu_c, \Sigma_c)$ (Normal distribution) for every $c \in \Sigma$ and $s \in \mathcal{S}$, then $\left[R(c)\right](x) = \Sigma_c^{-\frac{1}{2}} \cdot \left( x-\mu_c \right)$ is an optimal solution to~\cref{eq:main}.
\end{theorem}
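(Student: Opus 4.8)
The plan is to invoke \cref{cor:diff_leading_to_optimal}: it suffices to show that under the proposed mapping the conditional density $p(z\mid c,s)$ of the privatized message does not depend on the class $c$, i.e.\ that there is a function $f(z,s)$ with $p(z\mid c,s)=f(z,s)$ for all $c\in\Sigma$, $z\in\mathcal{I}$ and $s\in\mathcal{S}$. Optimality of $R$ then follows immediately from the corollary, with no need to reason about the objective $I(Z,C\mid S;R)$ directly.

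First I would record that $\left[R(c)\right](x)=\Sigma_c^{-\frac{1}{2}}\left(x-\mu_c\right)$ is a genuine privacy mapping function. Assuming each $\Sigma_c$ is symmetric positive definite, its symmetric inverse square root $\Sigma_c^{-\frac{1}{2}}$ exists and is invertible, so for every fixed $c$ the affine map $x\mapsto \Sigma_c^{-\frac{1}{2}}(x-\mu_c)$ is a bijection of $\mathcal{I}$ onto itself and in particular injective; hence $R(c)\in\mathcal{I}^{\underline{\mathcal{I}}}$ and the constraints of \cref{eq:main} are met.

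The heart of the argument is computing $p(z\mid c,s)$. Conditioned on $C=c,S=s$ we have $X\sim N(\mu_c,\Sigma_c)$ and $Z=\Sigma_c^{-\frac{1}{2}}(X-\mu_c)$, an affine transformation $Z=AX+b$ with $A=\Sigma_c^{-\frac{1}{2}}$ and $b=-\Sigma_c^{-\frac{1}{2}}\mu_c$. Since affine images of Gaussians are Gaussian, $Z\mid C=c,S=s$ is normal with mean $A\mu_c+b=0$ and covariance $A\Sigma_c A^{\top}=\Sigma_c^{-\frac{1}{2}}\Sigma_c\Sigma_c^{-\frac{1}{2}}=I$, using symmetry of $\Sigma_c^{-\frac{1}{2}}$ so that $\Sigma_c^{-\frac{1}{2}}\Sigma_c^{\frac{1}{2}}\Sigma_c^{\frac{1}{2}}\Sigma_c^{-\frac{1}{2}}=I$. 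Thus $p(z\mid c,s)$ equals the standard multivariate normal density, which is free of both $c$ and $s$; taking $f(z,s)$ to be this density supplies exactly the hypothesis of \cref{cor:diff_leading_to_optimal}, and optimality follows.

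The main obstacle --- though a mild one --- is the standing non-degeneracy assumption: $\Sigma_c^{-\frac{1}{2}}$ is only defined when $\Sigma_c$ is invertible, so one must assume each class-conditional covariance is positive definite (a degenerate Gaussian supported on a lower-dimensional subspace would require a pseudo-inverse and a more careful whitening argument, and could even break injectivity onto $\mathcal{I}$). The only other point to state carefully is the whitening computation itself; I would either cite the standard fact that $AX+b\sim N(A\mu+b,\,A\Sigma A^{\top})$ or, if a self-contained derivation is preferred, apply the change-of-variables formula $p_Z(z)=p_X\!\left(A^{-1}(z-b)\right)/\lvert\det A\rvert$ and simplify. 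Everything else is routine.
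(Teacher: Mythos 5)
Your proposal is correct and follows essentially the same route as the paper: the paper's proof likewise observes that $Z|C=c,S=s \sim N(\bar{0}, I)$, concludes $p(z|c,s)$ is not a function of $c$, and invokes \cref{thm:sufficient_cond_opt} (your appeal to \cref{cor:diff_leading_to_optimal} is the same step, just cited through the corollary that packages it). You simply spell out the whitening computation and the injectivity/positive-definiteness caveat that the paper leaves as ``easy to verify.''
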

\begin{proof}
It is easy to verify that $Z|C=c,S=s \sim N(\bar{0}, I)$ for every $s \in \mathcal{S}$ and $c \in \Sigma$, where $\bar{0}$ is the origin in the information space (vector of zeros) and $I$ is the identity matrix (of the appropriate dimensions). This means that $p(z|c,s) \equiv f(z,s)$ (not a function of $c$). By using~\cref{thm:sufficient_cond_opt}, we therefore know that $R$ is the optimal solution to~\cref{eq:main}.
\end{proof}

The proofs of the following theorems are similar to this of~\cref{thm:normal_opt_sol} and were thus omitted for conciseness.

\begin{theorem}
If $X|C=c,S=s \sim Exp(\lambda_c)$ (Exponential distribution) for every $c \in \Sigma$ and $s \in \mathcal{S}$, then $\left[R(c)\right](x) = \lambda_c x$ is an optimal solution to~\cref{eq:main}.
\end{theorem}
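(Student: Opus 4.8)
The plan is to mirror the proof of~\cref{thm:normal_opt_sol} and reduce the claim to an application of~\cref{cor:diff_leading_to_optimal}. That is, rather than manipulating the objective in~\cref{eq:main} directly, I would show that under the proposed mapping the conditional density $p(z \vert c,s)$ does not depend on $c$, whence optimality follows immediately from the corollary (which in turn rests on~\cref{thm:sufficient_cond_opt}).

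First I would check that $\left[R(c)\right](x) = \lambda_c x$ is a legitimate privacy mapping function, so that it is feasible for~\cref{eq:main}. Since the rate parameter $\lambda_c$ of an exponential distribution is strictly positive, the map $x \mapsto \lambda_c x$ is a nonzero linear scaling and hence injective on the support $[0,\infty)$; it therefore admits the left inverse $z \mapsto z/\lambda_c$, and the \descref{DECODING} premise is met.

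Next I would compute the law of $Z = \left[R(C)\right](X) = \lambda_C X$ conditioned on $C=c,\, S=s$. Starting from the density $p(x \vert c,s) = \lambda_c e^{-\lambda_c x}$ on $x \geq 0$ and applying the standard change-of-variables formula with $x = z/\lambda_c$ and Jacobian $1/\lambda_c$, I expect to obtain $p(z \vert c,s) = e^{-z}$ on $z \geq 0$; that is, $Z \vert C=c, S=s \sim Exp(1)$ for every $c \in \Sigma$ and $s \in \mathcal{S}$. The key point is that scaling by the positive constant $\lambda_c$ maps the support $[0,\infty)$ onto itself and simultaneously cancels the rate in the exponent, so the resulting density is the standard exponential independently of $c$ (and of $s$).

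Having established $p(z \vert c,s) = f(z,s)$ with $f(z,s) = e^{-z}\mathbf{1}_{\{z \geq 0\}}$ free of $c$, I would invoke~\cref{cor:diff_leading_to_optimal} to conclude that $R$ is an optimal solution to~\cref{eq:main}. I do not anticipate a genuine obstacle: the computation is essentially a one-line change of variables. The only point warranting care is confirming that the support of $Z$ is the same across all classes — were the transformation to distort the support in a $c$-dependent way, the residual support indicator would reintroduce a dependence on $c$ and break the hypothesis of~\cref{thm:sufficient_cond_opt}. Positivity of $\lambda_c$ is precisely what rules this out.
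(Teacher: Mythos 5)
Your proposal is correct and follows exactly the route the paper intends: verify that $Z \vert C=c, S=s \sim Exp(1)$ via the change of variables $z = \lambda_c x$, observe that the resulting density $e^{-z}\mathbf{1}_{\{z\geq 0\}}$ is free of $c$, and invoke~\cref{thm:sufficient_cond_opt} (via~\cref{cor:diff_leading_to_optimal}) to conclude optimality --- this is precisely the proof the paper omits as ``similar to that of~\cref{thm:normal_opt_sol}.'' Your additional checks of injectivity and support preservation under the positive scaling $\lambda_c$ are sound and only make the argument more complete.
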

\comment{\begin{proof}
It is easy to verify that $Z|C=c,S=s \sim Exp(1)$ for every $s \in \mathcal{S}$ and $c \in \Sigma$. This means that $p(Z=z|C=c,S=s) \equiv f(z,s)$ (not a function of $c$). By using~\cref{thm:sufficient_cond_opt}, we therefore know that $R$ is the optimal solution to~\cref{eq:main}.
\end{proof}}

\begin{theorem}
If $X|C=c,S=s \sim Gamma(k, \theta_c)$ (Gamma distribution with shape and scale parameters) for every $c \in \Sigma$ and $s \in \mathcal{S}$, then $\left[R(c)\right](x) = \frac{x}{\theta_c}$ is an optimal solution to~\cref{eq:main}.
\end{theorem}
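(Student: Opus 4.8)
The plan is to follow the template of \cref{thm:normal_opt_sol}: show that the proposed mapping renders the conditional law of $Z$ given $(C,S)$ independent of the class $c$, and then invoke the sufficient condition in \cref{thm:sufficient_cond_opt} (equivalently \cref{cor:diff_leading_to_optimal}).

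First I would verify that $R$ is a valid privacy mapping function, i.e. that $R(c)$ is injective for every $c \in \Sigma$. Since the scale parameter satisfies $\theta_c > 0$, the map $x \mapsto x/\theta_c$ is a strictly increasing rescaling of the positive half-line, hence injective, with left inverse $z \mapsto \theta_c z$. This secures \descref{DECODING} and places $R$ in the feasible set of \cref{eq:main}.

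Next I would compute the conditional density of $Z = X/\theta_c$ given $C=c,\,S=s$ via the change-of-variables formula. Writing the $Gamma(k,\theta_c)$ density of $X$ as $p(x|c,s) = \frac{1}{\Gamma(k)\,\theta_c^{k}}\,x^{k-1} e^{-x/\theta_c}$ for $x > 0$, the substitution $x = \theta_c z$ together with the Jacobian factor $\theta_c$ gives $p(z|c,s) = \frac{1}{\Gamma(k)}\,z^{k-1} e^{-z}$, i.e. $Z \mid C=c, S=s \sim Gamma(k,1)$. The crucial point is that the class-dependent scale $\theta_c$ cancels entirely while the shared shape parameter $k$ persists, so the resulting density is a fixed function $f(z,s)$ with no dependence on $c$.

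Finally, with $p(z|c,s) \equiv f(z,s)$ in hand, I would apply \cref{thm:sufficient_cond_opt} to obtain $D_{KL}\left( p(c|z,s) || p(c|s) \right) = 0$ for all $z \in \mathcal{I}$ and $s \in \mathcal{S}$, and then \cref{cor:diff_leading_to_optimal} to conclude that $R$ achieves $I(Z,C|S;R) = 0$ and is optimal for \cref{eq:main}. I do not expect a real obstacle here: the only subtlety worth flagging is that the argument depends essentially on $k$ being common to all classes — if the shape parameter varied with $c$, a pure rescaling could not homogenize the conditional law, and this particular mapping would fail.
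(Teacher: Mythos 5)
Your proposal is correct and follows essentially the same route as the paper: the paper states that the proof is analogous to that of \cref{thm:normal_opt_sol}, namely verifying that $Z|C=c,S=s \sim Gamma(k,1)$ so that $p(z|c,s)$ does not depend on $c$, and then invoking \cref{thm:sufficient_cond_opt}. Your explicit change-of-variables computation and your remark that the argument hinges on the shape parameter $k$ being shared across classes simply fill in details the paper leaves to the reader.
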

\comment{\begin{proof}
It is easy to verify that $Z|C=c,S=s \sim Gamma(k, 1)$ for every $s \in \mathcal{S}$ and $c \in \Sigma$. This means that $p(Z=z|C=c,S=s) \equiv f(z,s)$ (not a function of $c$). By using~\cref{thm:sufficient_cond_opt}, we therefore know that $R$ is the optimal solution to~\cref{eq:main}.
\end{proof}}

\comment{
Given two vectors $u,v \in \mathbb{R}^N$, we define $w = v \div u$ as the vector $w \in \mathbb{R}^N$ such that $\forall i \in \set{1,2,\dots , N}, w_i = \frac{v_i}{u_i}$. That is, $v \div u$ is the element-wise division of $u$ over $v$.
}
\begin{theorem}
If $X|C=c,S=s \sim U(a_c, b_c)$ (Continuous Uniform distribution) for every $c \in \Sigma$ and $s \in \mathcal{S}$, then $\left[R(c)\right](x) = \frac{ x-a_c }{ b_c-a_c }$ is an optimal solution to~\cref{eq:main}.
\end{theorem}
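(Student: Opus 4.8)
The plan is to mimic the proof of~\cref{thm:normal_opt_sol}: I would show that the proposed privacy mapping function standardizes the conditional law of $X$ so that the privatized message $Z$ has a conditional distribution that does not depend on the class $c$, and then invoke~\cref{cor:diff_leading_to_optimal}. Concretely, I would verify (i) that $\left[R(c)\right](x) = \frac{x - a_c}{b_c - a_c}$ is a valid component of a privacy mapping function, and (ii) that $Z \mid C=c, S=s \sim U(0,1)$ irrespective of $c$.

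First I would check injectivity. Since a proper uniform distribution $U(a_c, b_c)$ presupposes $a_c < b_c$, the map $x \mapsto \frac{x - a_c}{b_c - a_c}$ is affine with nonzero slope $\frac{1}{b_c - a_c}$, hence injective, with left inverse $z \mapsto a_c + (b_c - a_c)\, z$; because Alice knows $c$ she can apply this left inverse and recover $x$, so \descref{DECODING} holds and $R$ qualifies as a privacy mapping function.

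The key step is the change of variables. Given $X \mid C=c, S=s \sim U(a_c, b_c)$, whose density is $\frac{1}{b_c - a_c}$ on $(a_c, b_c)$ and $0$ elsewhere, I would compute the density of $Z = \frac{X - a_c}{b_c - a_c}$. The transformation is affine and strictly increasing with $\left| \frac{dx}{dz} \right| = b_c - a_c$, so the density of $Z$ is $(b_c - a_c)$ times the density of $X$ evaluated at $a_c + (b_c - a_c)\, z$, namely $(b_c - a_c)\cdot \frac{1}{b_c - a_c} = 1$ for $z \in (0,1)$ and $0$ otherwise. Thus $Z \mid C=c, S=s \sim U(0,1)$, whose density $f(z,s) = \mathbf{1}\{0 < z < 1\}$ is manifestly independent of $c$, giving $p(z \mid c,s) = f(z,s)$ for all $c \in \Sigma$, $z \in \mathcal{I}$ and $s \in \mathcal{S}$.

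Finally, I would apply~\cref{cor:diff_leading_to_optimal} to conclude that $R$ is an optimal solution to~\cref{eq:main}. The only delicate point is the Jacobian bookkeeping in the change of variables; once that is settled the conclusion is immediate. The one feature distinguishing this case from the Normal and Gamma cases is that \emph{both} endpoints $a_c, b_c$ vary with $c$, so I would take care to confirm that the support of $Z$ collapses to the \emph{common} interval $(0,1)$ for every class — this uniformization of the support is precisely what makes $f$ free of $c$ and is the crux of the argument.
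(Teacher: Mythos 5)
Your proof is correct and matches the paper's approach exactly: the paper omits this proof as "similar to that of \cref{thm:normal_opt_sol}," namely verifying that $Z\mid C=c,S=s\sim U(0,1)$ so that $p(z\mid c,s)$ is free of $c$, and then invoking the sufficient condition for optimality. Your change-of-variables computation and the remark about the support collapsing to $(0,1)$ simply fill in the details the paper labels "easy to verify."
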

\comment{\begin{proof}
It is easy to verify that $Z|C=c,S=s \sim U(0,1)$ for every $s \in \mathcal{S}$ and $c \in \Sigma$. This means that $p(Z=z|C=c,S=s) \equiv f(z,s)$ (not a function of $c$). By using~\cref{thm:sufficient_cond_opt}, we therefore know that $R$ is the optimal solution to~\cref{eq:main}.
\end{proof}}
\section{Implementation}
\label{sec:implementation}

In this section, we briefly describe an implementation of the learning problem that is publicly available in the form of a MATLAB\footnote{\url{https://www.mathworks.com/products/matlab/}} toolbox~\citep{aranki2015pditoolbox}. In this implementation, we investigate the question of learning a privacy mapping function $R$ from a labeled data set $\mathcal{D} = \set{(x_i,c_i)_i}$. This implies a simplifying assumption of ignoring the modeling of the random variable $S$ corresponding to the identity of the information providers. This assumption has the following implications on the model in~\cref{fig:graphical-model}. First, it implies that the adversary views information providers as uniformly distributed, that is $p(s) = \frac{1}{|\mathcal{S}|}$ for all $s \in \mathcal{S}$. Second, the assumption implies that the subject-class membership belief function of the adversary is equal for all subjects, that is $p(c \vert s) = p(c)$ for all $s \in \mathcal{S}$ and $c \in \Sigma$. As discussed in~\cref{sec:further_analysis}, in the cases where perfect privacy is achievable, the solutions are independent of these models and therefore these implications are not limiting. Further study is necessary to assess the level of privacy-degradation incurred by this assumption in cases of imperfect privacy. Third, this assumption implies that the generative model of data per class is independent of the subjects, that is $p(x|c,s) = p(x|c)$ for all $x \in \mathcal{I}, c \in \Sigma$ and $s \in \mathcal{S}$. Finally, $I(Z,C | S ; R)$ simplifies to $I(Z,C;R)$.

In order to make the problem in~\cref{eq:main} computationally tractable, a parametrized space for the privacy mapping functions can be introduced, allowing for the optimization to be performed on the parameter space. For example, consider the following parameter space
\begin{align*}
\Theta(n,\Sigma) = \left\{(A_c, b_c)_{c \in \Sigma} \vert \right. & \forall c \in \Sigma: A_c \in \real^{n \times n},\\
& \left. b_n \in \real^n, det(A_c) \neq 0 \right\}
\end{align*}

Then a parametrized space for affine privacy mapping functions on the classes set $\Sigma$ and information space $\mathcal{I}$ of dimension $n$ can be defined as
\begin{align*}
I_D(n,\Sigma,\mathcal{I}) = \left\{ R( \right. & \cdot; \theta) \vert \theta \in \Theta(n,\Sigma), R(\cdot;\theta) \in \left( \Sigma \rightarrow \mathcal{I}^{\underline{\mathcal{I}}} \right), \\
&\left. \forall c\in \Sigma : \left[R(c;\theta)\right](x) = A_c \cdot (x - b_c) \right\}
\end{align*}

Provided a parameter search space $\Theta$, the optimization problem in~\cref{eq:main} can be re-written as
\begin{equation}
\theta^* = \argmin_{\theta \in \Theta} I(Z,C ; R(\cdot; \theta)) \label{eq:main_par}
\end{equation}

%that this assumption does not degrade the quality of privatization by much in cases of imperfect privacy.

The straightforward way to modeling the required distributions $p(c)$ and $p(x|c)$, from data, is non-parametrically by using high-dimensional histograms. This approach, while simple to implement, suffers from the curse of dimensionality as its complexity grows exponentially with the dimension of the information space. Once the models for $p(c)$ and $p(x|c)$ are constructed, the model for $p(z|c)$ can be computed for any choice of $\theta \in \Theta$ allowing the computation of the objective function in~\cref{eq:main_par}. Since the problem is non-convex, in order to optimize the objective function, we employ the genetic algorithm with the fitness function equal to the objective function in~\cref{eq:main_par}. The chosen selection policy is fitness-proportional while the chosen transformations (evolution/genetic) operators are both mutations and crossovers~\citep{banzhaf1998genetic}.

\section{Experimentation}
\label{sec:experiment}

In this section we walk the reader through an example that aims to motivate and demonstrate PDI. In this example we use data that are published by the Center for Disease Control and Prevention (CDC) as part of the National Health and Nutrition Examination Survey of 2012.\footnote{\url{https://wwwn.cdc.gov/nchs/nhanes/search/nhanes11\_12.aspx}} Specifically, we use the Body Measures (BMX\_G) portion of the data.\footnote{\url{https://wwwn.cdc.gov/nchs/nhanes/2011-2012/BMX\_G.htm}}

\subsection{Setting} \label{sec:experiment:subsec:setting}
In our setting, we consider the disclosed information to be both Body Mass Index (BMI) and weight. Our information providers are assumed to be individuals of both genders that are $19$ years of age or less. We consider the private information to be the weight status category of the subject. The CDC considers the following four standard weight status categories for the aforementioned age group
\begin{inparaenum}[\itshape i\upshape)]
\item underweight;
\item healthy weight;
\item overweight; and
\item obese.
\end{inparaenum}
There are $3355$ data points in the data set with subjects of $19$ years of age or less.

According to the definitions of the CDC, the BMI category of a child or a teen is classified based on the individual's BMI percentile among the same age and gender group as described in~\cref{tbl:BMI-for-age-table}. Since the age of the information provider is not part of the information space, the inference of the weight status category of the information provider based on BMI and weight is not perfect. The data for the different classes are depicted in~\cref{fig:BMI-weight-raw}.

%Our setting simulates a monitoring situation where individuals 

\begin{table}[t]
\caption{BMI-for-age weight status categories and The corresponding BMI percentiles.} \label{tbl:BMI-for-age-table}
\begin{center}
\begin{tabular}{ll}
{\bf Weight Category}  &{\bf BMI Percentile Range} \\
\hline
Underweight & ~~~~~~~~~$BMI<5\%$ \\
Healthy Weight & ~$5\% \leq BMI < 85\%$ \\
Overweight & $85\% \leq BMI < 95\%$ \\
Obese & $95\% \leq BMI$
\end{tabular}
\end{center}
\vspace{-.25in}
\end{table}

\subsubsection{Inference Based on Original Data}
\label{sec:experiment:subsec:setting:sss:inference}

Using the data, we trained $3$ SVM classifiers with Gaussian kernels. The classifiers are aggregate in terms of the ``positive" class in the following sense. The first classifier treats the ``positive" class as the Underweight category (and so the ``negative" class is the rest of the categories). The second classifier treats the ``positive" class as either the underweight or healthy weight categories. Finally, the third classifier treats the ``positive" class as any category except the obese category. We used a $40-60$ split for training-testing. In numbers, we used $1371$ data points for training and $1984$ data points for testing.

The training for all SVMs was done using $10$-fold cross-validation among the data in the training set to pick the best $\sigma$ of the Gaussian kernels and the best box boundaries of the classifiers. The classification phase is done by taking a majority vote from the $3$ classifiers and the output is the class which most classifiers agree on. The results of the classifier are described in~\cref{tbl:results-raw} in terms of the confusion matrix of the different categories. The total accuracy of the classifier is $88.31\%$.\footnote{The adopted total accuracy measure is $trace(M)/N$ where $M$ is the confusion matrix and $N$ is the cardinality of the test set. This is the percentage of true classifications over the test set.}

\subsection{Privatizing Information} \label{sec:experiment:subsec:privatizing}

We would like to privatize the information at hand (BMI and weight) in order to maintain the weight status category as private as possible (based on the training set only). This scenario simulates a tele-monitoring scenario and fits the assumptions and motivation introduced in~\cref{sec:problem_formulation}. Therefore, we aim to utilize PDI in order to privatize the data as discussed earlier. In order to learn the privacy mapping function from the training data, we use the MATLAB toolbox mentioned in~\cref{sec:implementation}~\citep{aranki2015pditoolbox}. We used the affine privacy mapping functions for the parameterized search space as shown in the example in~\cref{sec:implementation}. Note that there are extra degrees of freedom in the problem, since any privacy mapping functions $R_1$ and $R_2$ related by $\forall c \in \Sigma: R_2(c) = A \cdot (R_1(c) - b)$ yield the same objective value in~\cref{eq:main} for any $A \in \real^{n \times n}, det(A) \neq 0$ and $b \in \real^n$. That is, applying the same injective affine transformation to all encoding functions in $R$ does not change the value of $I(Z,C|S;R)$. Therefore, in our problem we fix the encoding function of the ``underweight" class to the identity function, i.e. $\left[R(\text{``underweight"})\right](x) = x$.

\begin{table}[t]
\caption{Confusion matrix before privatizing. UW = Underweight, HW = Healthy Weight, OW = Overweight, OB = Obese} \label{tbl:results-raw}
\begin{center}
\begin{tabular}{cccccc}
& & \multicolumn{4}{c}{\bf Ground Truth Category} \\
\cline{3-6}
& &{\bf UW}  &{\bf HW} &{\bf OW} &{\bf OB} \\
\cline{3-6}
\multirow{4}{*}{\rotatebox[origin=c]{90}{\bf Predicted}~\rotatebox[origin=c]{90}{\bf Category}}
															&\multicolumn{1}{|c|}{\bf UW}
																& $47$ & $20$ & $0$ & $0$ \\
															&\multicolumn{1}{|c|}{\bf HW} 
																& $14$ & $1203$ & $66$ & $1$ \\
															&\multicolumn{1}{|c|}{\bf OW} 
																& $0$ & $45$ & $194$ & $47$ \\
															&\multicolumn{1}{|c|}{\bf OB} 
																& $0$ & $2$ & $37$ & $308$ 
%Classifier 1 & $.8653$ & $.8919$ \\
%Classifier 2 & $.9333$ & $.9286$ \\
%Classifier 3 & $.9248$ & $.9159$
%Classifier 1 & $77.05\%$ & $98.96\%$ \\
%Classifier 2 & $96.47\%$ & $89.74\%$ \\
%Classifier 3 & $97.60\%$ & $86.52\%$
\end{tabular}
\end{center}
%\vspace{-.2in}
\end{table}
\begin{figure}[t]
%\begin{wrapfigure}{l}{.25\textwidth}
%\vspace{.3in}
\centerline{
\includegraphics[width=.25\textwidth]{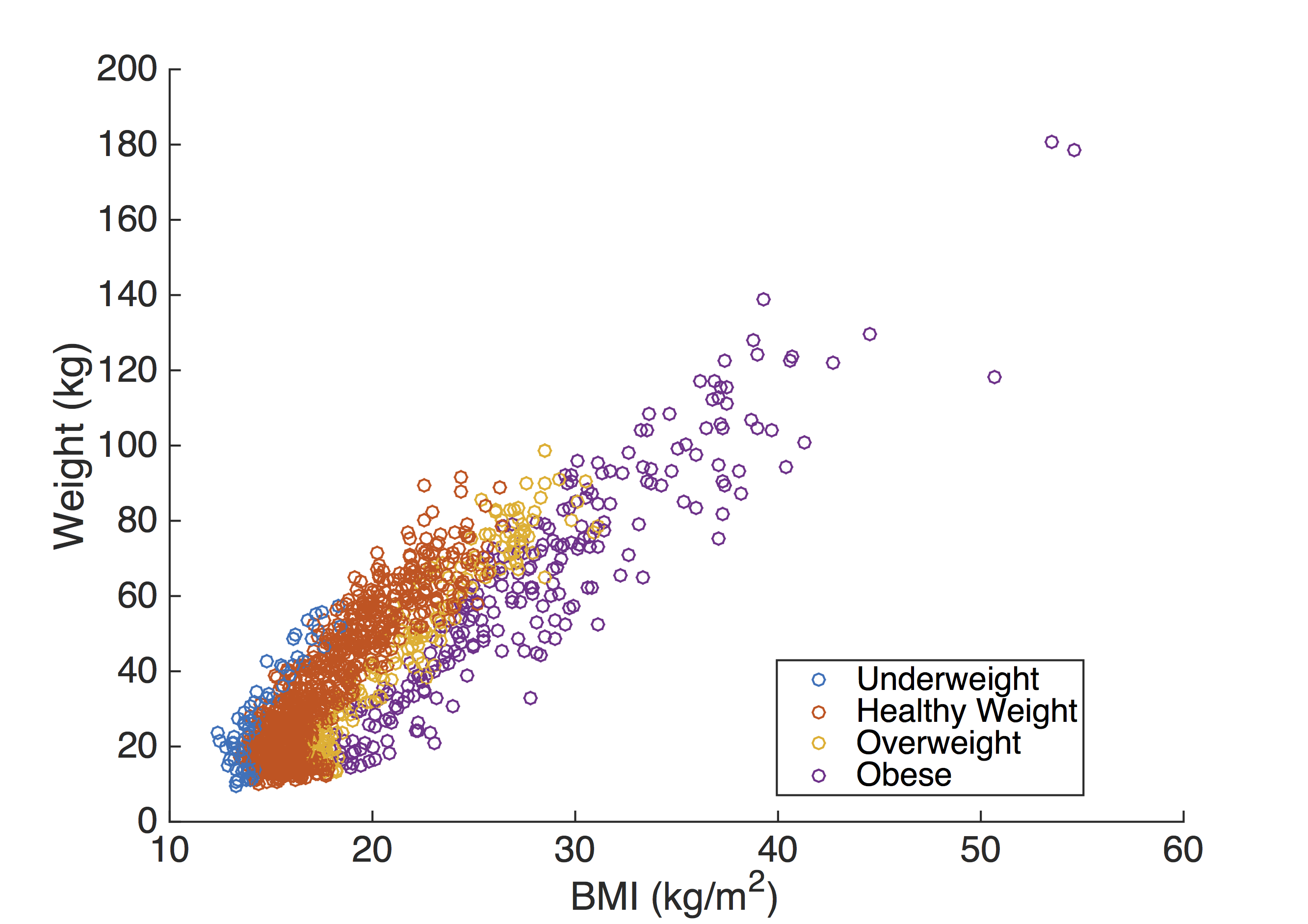}
}
%\vspace{.15in}
\caption{BMI and weight for the different weight status groups.} \label{fig:BMI-weight-raw}
%\end{wrapfigure}
\end{figure}

The resultant privatized information is depicted in~\cref{fig:BMI-weight-private}. It is clear that it should be much harder to do inference of the weight category based on this privatized data, given the decreased distinguishability between classes. Note that calculating the privatized information is simple and efficient since now we know the parameters for the privacy mapping functions for the different classes.

\subsubsection{Inference Based on Privatized Data}

In order to evaluate the quality of the privatization, we now train new $3$ SVM classifiers with the same training procedure as in~\cref{sec:experiment:subsec:setting:sss:inference}, but this time using the privatized data (and of course, encoding the test set too for evaluation). Same as before, we then use a majority vote from the $3$ classifiers to predict the class of any data point. The resultant confusion matrix is described in~\cref{tbl:results-private}.

\begin{table}[t]
\caption{Confusion matrix after privatizing. UW = Underweight, HW = Healthy Weight, OW = Overweight, OB = Obese} \label{tbl:results-private}
\begin{center}
\begin{tabular}{cccccc}
& & \multicolumn{4}{c}{\bf Ground Truth Category} \\
\cline{3-6}
& &{\bf UW}  &{\bf HW} &{\bf OW} &{\bf OB} \\
\cline{3-6} 
\multirow{4}{*}{\rotatebox[origin=c]{90}{\bf Predicted}~\rotatebox[origin=c]{90}{\bf Category}}
															&\multicolumn{1}{|c|}{\bf UW}
																& $48$ & $14$ & $0$ & $5$ \\
															&\multicolumn{1}{|c|}{\bf HW} 
																& $13$ & $1217$ & $276$ & $290$ \\
															&\multicolumn{1}{|c|}{\bf OW} 
																& $0$ & $25$ & $13$ & $29$ \\
															&\multicolumn{1}{|c|}{\bf OB} 
																& $0$ & $14$ & $0$ & $32$ 
%Classifier 1 & $.8653$ & $.8919$ \\
%Classifier 2 & $.9333$ & $.9286$ \\
%Classifier 3 & $.9248$ & $.9159$
%Classifier 1 & $77.05\%$ & $98.96\%$ \\
%Classifier 2 & $96.47\%$ & $89.74\%$ \\
%Classifier 3 & $97.60\%$ & $86.52\%$
\end{tabular}
\end{center}
%\vspace{-.2in}
\end{table}
\begin{figure}[t]
\centerline{
\subfigure{\includegraphics[width=0.25\textwidth]{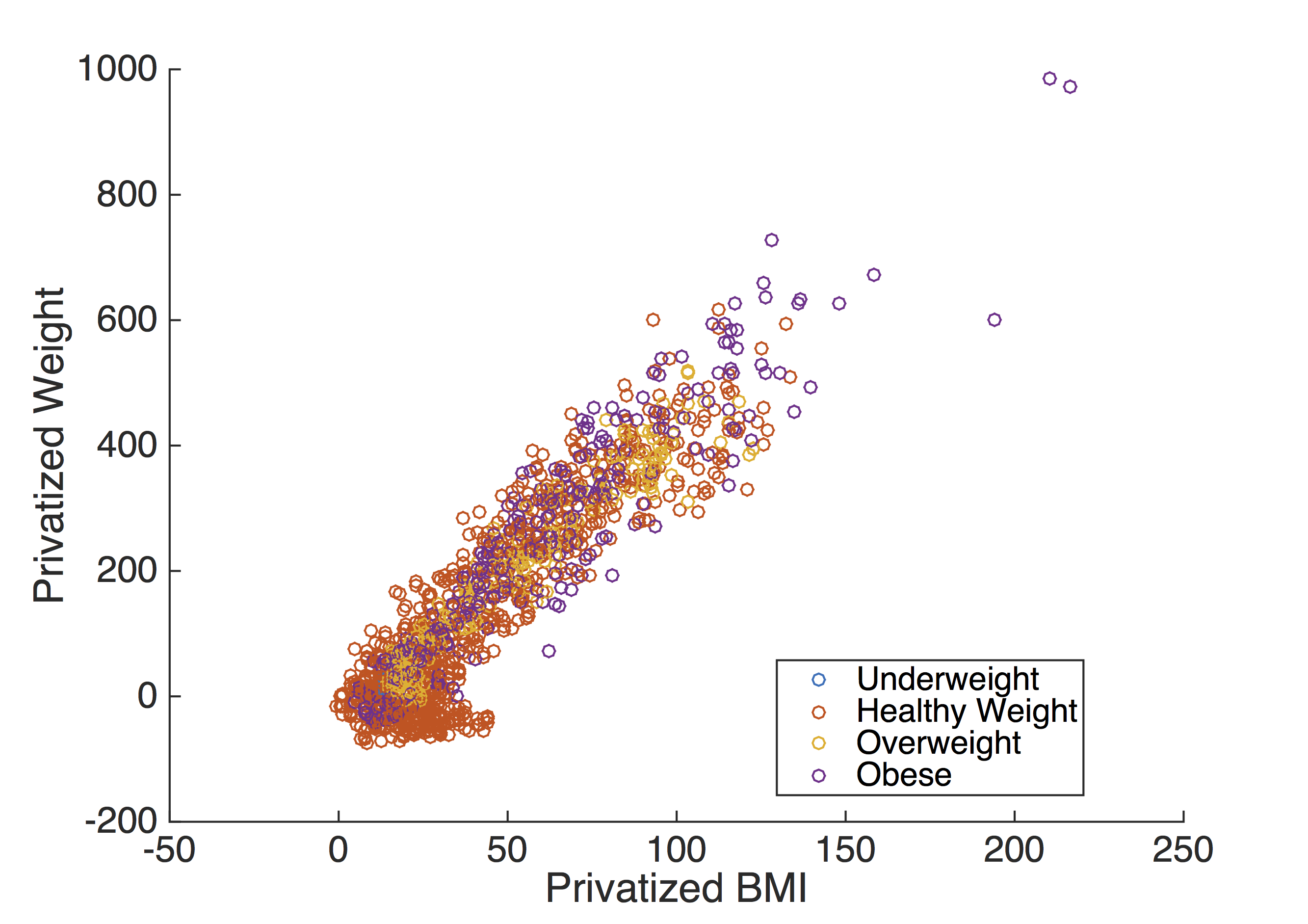}}
\subfigure{\includegraphics[width=0.25\textwidth]{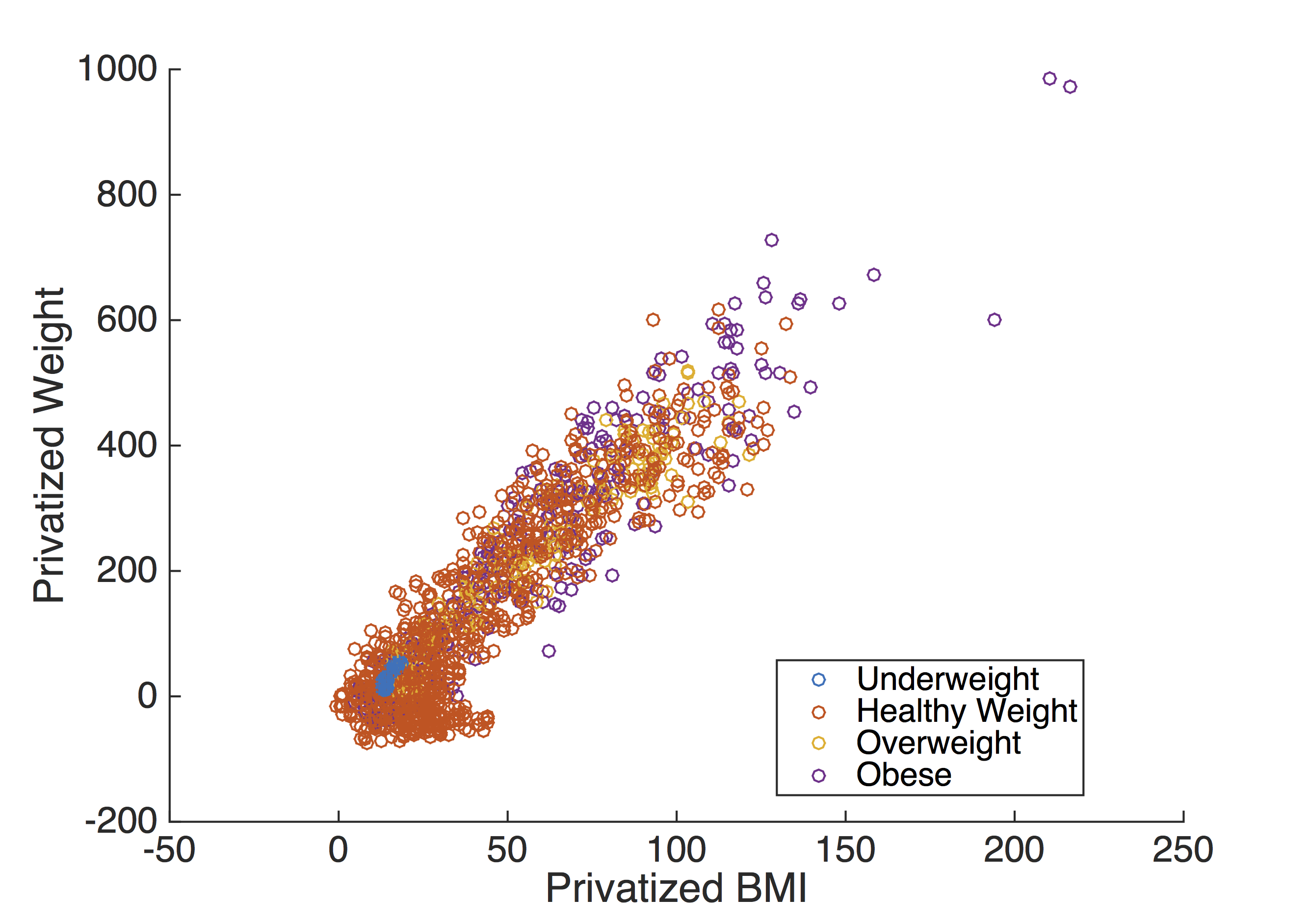}}
}
\caption{BMI and weight for the different weight status groups after privatization. The difference between the two plots is the order of plotting the different classes (for visual clarity).} \label{fig:BMI-weight-private}
%\vspace{-.05in}
\end{figure}

It is clear that the classification results are degraded after privatizing the information. The total accuracy dropped to $66.03\%$ (from $88.31\%$). Given that the data from different classes are highly indistinguishable, the classifier now classifies most data points as ``healthy weight". This is to be expected since most of the data points are in the ``healthy weight" category. In informal words, if a classifier would have to make a ``bet", it would bet on the class with the most amount of data points. Formally, a lower bound on the total accuracy can be achieved by considering the trivial classifier that always predicts ``healthy weight" (deterministic), which has total accuracy of $1270/1948 = 64.01\%$. This shows that our result of $66.03\%$ is not much further from a lower-bound guaranteed accuracy.

Note that the data set is biased in size against the ``underweight" category. There are only $126$ data points with weight category ``underweight" out of the $3355$ total data points ($3.76\%$). This makes privatizing that class particularly hard, especially because the modeling is based on $n$-dimensional histograms and is not parametric. For this reason the classification results before and after privatization for the ``underweight" category are comparable.

To intuitively demonstrate how privacy is preserved, we take a piece of privatized information at random from our data set, $z=[77.17, 296.45]^T$, without looking at its ground truth weight category. If we decode this data point using the decoding function of ``healthy weight", we get $x=[21, 53.8]^T$, which is a legitimate ``healthy weight" BMI and weight data point. If we use the decoding function of ``overweight", we get $x=[25.12, 62.4]^T$, which is also a legitimate ``overweight" BMI and weight data point. Similarly, if we use the decoding function of ``obese", we get $x=[30.42, 69.08]^T$, which is also a legitimate ``obese" BMI and weight data point.

\section{Discussion and Future Work}
\label{sec:discussion}

In this paper, we presented a view on privacy in which the data themselves \emph{need not} be the private object, but rather can \emph{be used} to infer private information. From this point of view, we derived a framework that preserves the privacy of the private information from being inferred from the communicated messages. We provided theoretical analysis and properties of the devised framework. An important result (\cref{thm:sufficient_cond_opt}) provided conditions that ensure perfect privacy while preserving full data utility. We showed that such conditions are achievable by providing closed-form solutions to some cases of data generative models.~\Cref{thm:sufficient_cond_opt} further showed that perfect privacy is not a function of the modeling of the adversary's auxiliary knowledge about the private information per subject, $p(c|s)$ (or $p(s)$). This observation is important because modeling adversary's auxiliary knowledge is generally a hard problem, and because it showed that perfect privacy can be achieved regardless of the adversary's auxiliary knowledge. That is, the same privatization protects information providers from all adversaries, regardless of their auxiliary knowledge.

Subsequently, we discussed an implementation of the learning problem resulting from the framework and demonstrated its use with a data set published by the Center for Disease Control and Prevention using data about individuals' Body Mass Indices, weights and their weight status categories. The experimentation shows that after privatizing the data set, the classification accuracy drops significantly, near a lower bound of guaranteed classification accuracy, thus achieving our set goal.

We make two important remarks about the approach presented in this paper. First, the described approach is philosophically different from the classical cryptography as it provides a model where the objective is maintaining the secrecy of the private information that is not the data themselves but the information that can be inferred based on the data. Second, even though the proposed approach is privacy-centric, it is not meant to serve as an alternative to cryptography but as a complement to it. That said, any message can be ``privatized" then encrypted. If the encryption is in that case compromised by an adversary getting access to the clear text message, the privacy is still preserved.

The current implementation of the devised learning problem suffers from the curse of dimensionality. The cost of learning grows exponentially with the number of dimensions of the information space. This is a result of our choice to model $p(z|c)$ as a multi-dimensional histogram. To make this framework practical, there is a need to study other ways of estimating the mutual information measure between the disclosed information and the private class. One appealing option is leveraging parametric learning and modeling each distribution $p(z|c)$ as a mixture model which could result in more computationally efficient estimation of the mutual information measure.

The presented framework has the potential of being extended to scenarios where the data recipient is not completely certain about the private class but is still more certain than the adversary. Such scenarios are clearly more general and may result in wider applicability of the framework to other scenarios than presented here. Indeed, in such scenarios, communicated messages can only be interpreted in a statistical sense and the implications of such assumptions must be studied as well.

Furthermore, the current implementation of the learning problem assumes that adversaries have equal belief about all the information providers so that the adversary's belief about $C$ is independent of $S$ and that the generative model of data $X$ per private class $C$ is independent of $S$. This is a simplifying assumption and its implications need to be further studied and remedied.

Given the non-convexity and the complexity of the problem at hand, areas for future research include studying heuristic techniques to learn the privacy mapping functions from sufficient and/or necessary conditions for local improvements in the mutual information as a function of local changes in the privacy mapping functions. This approach, as opposed to finding global optimal solutions to~\cref{eq:main}, is analogous to finding minimal anonymization as opposed to optimal anonymization in privacy preserving data publishing~\citep{fung2010privacy}.

% Acknowledgements should only appear in the accepted version. 
\section*{Acknowledgments} 

We would like to thank Katherine Driggs Campbell for the initial conversation that spurred this idea. We are also greatly indebted to Gregorij Kurillo, Yusuf Erol and Arash Nourian for their fruitful discussions and feedback that significantly improved the quality of this paper. This work was supported in part by TRUST, Team for Research in Ubiquitous Secure Technology, which receives funding support for the National Science Foundation (NSF award number CCF-0424422).

\bibliographystyle{plainnat}
\bibliography{paper}
\vfill
\comment{
\vfill\pagebreak

\appendix
\section{Appendix -- Implementation and Toolbox Details}
\subsection{Introduction}
In this appendix, we describe in more details our implementation of the learning procedure for~\cref{eq:main_par} in MATLAB as a toolbox. The toolbox design was inspired by the structure and the convenience of coding in CVX~\citep{cvx,gb08}.\footnote{A MATLAB convex modeling and optimization framework, \url{http://cvxr.com/cvx/}} Therefore, the toolbox adopts a similar structure of coding and provides several ``keywords" for the users to use in order to define a PDI problem in a similar manner that one would define a CVX problem. We will present these keywords and their behavior in the following subsections. If the reader is familiar with CVX, then the similarities with CVX should be helpful to understand our toolbox.

The rest of this section is organized as follows. First we describe our data structures in~\cref{sec:impl:subsec:data-structures}. Then in~
\cref{sec:impl:subsec:using} we provide a short usage manual for the toolbox and finally in~\cref{sec:impl:subsec:the-engine} we describe the implementation of the learning engine for the privacy mapping functions.

\subsection{Data structures} \label{sec:impl:subsec:data-structures}

\subsubsection*{The PDI problem}
The data structure \myverb|pdi_problem| encapsulates the definition of the PDI problem at hand. This data structure holds all the necessary information about the parametrized search space, the constraints (if any) over the parameters of the search space, the data that will be used for modeling, the class definitions and the definitions of the dimensions of the information space at hand.

Two keywords were implemented to define a PDI problem definition block, namely \myverb|pdi_begin| and \myverb|pdi_end|. Any code that is related to the definition of the PDI problem at hand, as will be described in the following subsections, should be inserted between these two keywords. It is worth noting that nested PDI blocks are not allowed.
\comment{
\begin{lstlisting}[label=listing:skeleton,caption=Skeleton for a PDI problem definition in MATLAB]
PDI_begin 	% declare the beginning of
			% a PDI problem definition
			
	% PDI problem definition code goes
	% here
	
PDI_end 	% declare the end of a
			% PDI problem definition
\end{lstlisting}}

\subsubsection*{PDI variables}
A PDI variable is the atom object that the user can use to describe the parameters of the search space. For example, consider the search space $\mathcal{I_D} \triangleq \set{f \vert f(x;A,b) = A \cdot x - b, A \in \real^{N \times N}, b \in \real^{N}}$, the space of affine functions $\real^N \rightarrow \real^N$. Then, if the user has $k$ classes so that $\Sigma = \set{c_1, \dots , c_k}$, in order to represent a privacy mapping function $R(\cdot; \theta)$, parametrized by $\theta$, one would need a set of parameters $\theta_i \triangleq \left(A_i, b_i\right)$ for every $i \in \set{1,\dots k}$. Using $\theta \triangleq (\theta_i)_{i=1}^k$ one can describe the complete privacy mapping function. This results in $\left[R(c_i; \theta)\right](x) \triangleq f(x;\theta_i) \in \mathcal{I_D}$ for every $\sigma_i \in \Sigma$. PDI variables are implemented in the class \myverb|pdi_variable| to represent objects like $A_i$ and $b_i$.

\comment{For instance, think about $A_i$ and $b_i$ being PDI variables for each class $\sigma_i$. }The toolbox provides the keyword \myverb|pdi_var| that allows users to define variables for parameterization of the differential mapping functions. The syntax for using this keyword is as follows: \myverb|pdi_var varName(n, m)| which declares a PDI variable of size $n \times m$ with name \myverb|varName|. For convenience \myverb|pdi_var varName| is shorthand for \myverb|pdi_var varName(1,1)|. Currently, a PDI variable can be either a vector or a matrix.

\subsubsection*{PDI expressions and constraints}
In order to allow the representation of constraints over PDI variables (both convex and non-convex constraints), we implemented a data structure for PDI expressions. A PDI expression holds information about a mathematical expression that involves PDI variables, for example the expression \myverb|var1(1:2,[2 3])^2 - 3| where \myverb|var1| is a PDI variable (say of size $2 \times 3$ or larger) is a PDI expression. PDI expressions are implemented in the class \myverb|PDI_expression|.

PDI expressions have two main functions. The first function is that objects of the type \myverb|PDI_expression| can hold (potentially long) mathematical expressions involving PDI variables so that they can be used repeatedly. The second, and most important function is that PDI expressions are the building blocks of defining constraints over PDI variables. For example, if one of our PDI variables is a matrix \myverb|A| that we want to have determinant equal to one, by writing the line of code \myverb|det(A) == 1| inside the PDI problem block, the PDI engine will create a PDI expression for \myverb|det(A) - 1| and add a constraint on that expression (to be equal to $0$) to the \myverb|PDI_problem| object representing the PDI problem. These constraints are later passed to the learning engine so that it finds a feasible solution according to the user-defined constraints.

Since our problem is non-convex, there is no requirements for the constraints to be linear or even convex. However, since treating linear constraints is more efficient in general than treating non-linear constraints, linear constraints are tagged in the \myverb|PDI_problem| object so that they are passed separately to the learning engine for more efficient computation.

\comment{For visual convenience and for ease of reading the code, the keyword \myverb|subject_to| is provided to mark the beginning of the constraints block. The keyword is a void keyword that does nothing other than holding a line of code that makes the code look nicer.}

\subsection{Using the toolbox} \label{sec:impl:subsec:using}
\subsubsection*{More keywords}
The toolbox provides more keywords that can be used in the PDI problem definition block. The toolbox allows users to declare dimensions of the information space using the keyword \myverb|PDI_dimension|. As will be seen later, the current implementation models data and classes in a non-parametric way by binning the data into $n$-dimensional histograms, so the keyword also allows the user to define the bins to be used for modeling in that dimension. For example, \myverb|PDI_dimension weight 0:5:200| declares a dimension with the name ``weight" and with bins \myverb|0:5:200|.

Another building block of a PDI problem is a class (the private information). In order to declare a class, the keyword \myverb|PDI_class| can be used. For example, \myverb|PDI_class male| and \myverb|PDI_class female| declare the two classes ``male" and ``female", class names are one word strings (no spaces). For convenience, we allow declaration of multiple classes in one call of \myverb|PDI_class| by delimiting different class names by spaces. For instance the line \myverb|PDI_class male female| will declare both classes ``male" and ``female" in one line.

Sometimes, it is helpful to have a constant-like keyword that returns the total number of dimensions of the information space declared in the problem. This is for example useful when trying to define a variable that is of the same dimension of the information space. For that the keyword \myverb|PDI_nrdimensions| is provided. Similarly, it is useful to have a constant-like keyword that returns the total number of classes declared in the problem. For that the keyword \myverb|PDI_nrclasses| is provided. For example, the line \myverb|PDI_var b(PDI_nrdimensions, PDI_nrclasses)| will declare a PDI variable of size $N \times k$ assuming $N$ is the dimension of the declared information space and $k$ is the number of declared classes in the PDI problem.

In order to provide the data for the learning procedure (data per class), the keyword \myverb|PDI_datapoints| is provided. For example, \myverb|PDI_datapoints male male_data| provides the data stored in the variable \myverb|male_data| as coming from the class ``male" to the learning procedure. The convention we use is that a single data point is a column vector, so that in the example above \myverb|male_data| is expected to be of size $N \times m$ where $N$ is the dimension of the information space and $m$ is the number of data points provided.

A complete list of the keywords provided in the toolbox can be found in~\autoref{tbl:toolbox-keywords} and a complete example with results is provided in~\autoref{sec:experiment:subsec:privatizing}.

\begin{table*}[ht!]
\caption{List of Keywords And Their Descriptions} \label{tbl:toolbox-keywords}
\begin{center}
\begin{tabular}{| l | L{7.2cm} | l |}
\hline
{\bf Keyword and syntax} & {\bf Description} \\
\hline \hline
\myverb|PDI_start| & Begin a PDI problem definition block\\ \hline
\myverb|PDI_dimension <dimension name> <dimension bins>| & Declare a new dimension of information\\ \hline
\myverb|PDI_class <class1> <class2> ...| & Declare new classes of information providers\\\hline
\myverb|PDI_datapoints <class> <data expression>| & Provide data points from a class for the learning procedure\\\hline
\myverb|PDI_var <var 1>[(n1, m1)] <var 2>[(n2, m2)] ...| & Declare new PDI variables\\\hline
\myverb|PDI_reference <R(fv, cN)> <expression>| & Provide the (parametrized) differential mapping function\\\hline
\myverb|PDI_nrdimensions| & Returns the number of dimensions defined in the PDI problem\\\hline
\myverb|PDI_nrclasses| & Returns the number of classes defined in the PDI problem\\\hline
%\myverb|subject_to| & Syntactic sugar to indicate the beginning of constraints section\\\hline
\myverb|PDI_end| & End a PDI problem definition block\\ \hline
\end{tabular}
\end{center}
\end{table*}

\subsubsection*{Structure of a PDI program}
We now describe the general rules that need to be satisfied in order to properly write a PDI program. A PDI program always starts with the keyword \myverb|PDI_begin|, followed by the PDI problem definition and always ends with the keyword \myverb|PDI_end|. In order to maintain consistency of data (in terms of the dimensions), it is assumed that all dimensions are declared before the first \myverb|PDI_datapoints| keyword is invoked. That said, once any data point is provided the information space dimensions are locked and cannot be edited any further. The reason is that the toolbox checks that the dimensions of the provided data points are consistent with the declared dimensions of the information space and therefore the toolbox assumes that all dimensions are declared beforehand. Failing to do so will result in an error thrown by the toolbox and the computation will be terminated.

Another rule concerning PDI variables and constraints is as follows. It is assumed that all PDI variables are declared before providing the parametrized reference function, i.e. before calling \myverb|PDI_reference| (to be introduced). Also, it is assumed that all PDI variables are declared before adding any constraints to the problem definition. The reason for the latter requirement is that a linear constraint is represented by the coefficients used to create it. For instance, if \myverb|x1|, \myverb|x2| and \myverb|x3| are PDI variables of size $1 \times 1$ each (for simplicity), the constraint \myverb|5 * x1 +  7 * x2 <= 11| is represented as the vector \myverb|[5 7 0 -11]|. In general, in order to map from a linear inequality (\myverb|<=|) constraint vector \myverb|v| to the symbolic representation, the following translation is used: \myverb|VARS * v(1:end-1) + v(end) <= 0|. From this, in order to keep the consistency of the constraints (in terms of dimensions), it is desired to know the number of variables before representing any constraints. Note that the last rule is an implication of the current implementation and can be later relaxed by fixing any existing constraints every time a new variable is declared. This can be done by appending zeros to the corresponding entries of the new variables in all existing linear constraint vectors. Although, we note that this rule isn't very limiting and therefore doesn't degrade the functionality of the toolbox.

The easiest way to remember these rules is by simply using the following order of things.
\begin{inparaenum}[1\upshape)]
\item Start a PDI problem definition block by stating \myverb|PDI_begin|;
\item declare all dimensions;
\item declare all classes and provide data for each class;
\item declare all PDI variables;
\item declare the reference function;
\item add all needed constraints; and
\item close the PDI problem definition block by using \myverb|PDI_end|.
\end{inparaenum}

\subsection{The engine} \label{sec:impl:subsec:the-engine}
The PDI engine is the entity that performs the learning of the parameters in the parameter space so that the mutual information between the differential information $Z$ and the class $C$ is minimized. In order to describe the engine, we will describe the way a user can declare the parametrized search space of differential mapping functions. For that, the keyword \myverb|PDI_reference| is used to declare the parametrized differential mapping function space. To best explain this keyword, we use the following example.

Consider the $\real^N \rightarrow \real^N$ affine functions space $\set{f \vert f(x;A,b) = A \cdot x - b, A \in \real^{N \times N}, b \in \real^{N}}$ and a situation where we have $k$ classes. First, we have to declare a matrix $A_i$ and a vector $b_i$ for each class. For the vectors $b_i$, one can stack them into a matrix \myverb|b| of size $N \times k$ such that the column \myverb|b(:,i)| is the $b_i$ vector corresponding to the class $i$. For the matrices $A_i$, we will represent them as a matrix \myverb|A| of size $N^2 \times k$ such that the column \myverb|A(:,i)| is the flattened matrix $A_i$ corresponding to class $i$, so that \myverb|reshape(A(:,i), N, N)| is our $A_i$. This can be done by declaring \myverb|PDI_var A(PDI_nrdimensions^2,PDI_nrclasses)| and \lstinline|PDI_var b(PDI_nrdimensions,PDI_nrclasses)|. Having these, we can write
\vspace{-.2in}
\begin{lstlisting}[breaklines]
PDI_reference @(xs, classN) bsxfun(@minus,
	reshape(
		A(:,classN), 
		PDI_nrdimensions, 
		PDI_nrdimensions
		) * xs),
	b)
\end{lstlisting}

In the code above, \myverb|@(xs, classN)| is used to define the function parameters, where the first one \myverb|xs| represents the data points passed to the function ($m$ data points of dimensions $N$ will be submitted as one call of with a $N \times m$ matrix) and \myverb|classN| is the class number (a single value). Each call to the reference function will include data only from one class.

Note that
\begin{inparaenum}[\itshape i\upshape)]
\item the reference function is assumed to be vectorized with respect to the first parameter, i.e. a matrix of data points will be passed to it in each call; and
\item the reference function body can include a call to an external function, i.e., users can design their own reference functions as regular MATLAB functions and use them in the function expression of \myverb|PDI_reference|.
\end{inparaenum}

The first step in the learning procedure is to prepare a prior for $C$ using the data provided by the user. This is modeled directly from the data as a histogram of data-class memberships. The next step in the engine is to represent an optimization objective function given the definition of a parameterized differential mapping function. Recall that the \myverb|PDI_reference| definition uses objects of type \myverb|PDI_variable| and note that it is provided to the toolbox as a string. For these two reasons, a compilation step is performed in order to translate the string representation of the reference function into MATLAB code and to resolve any PDI variables used in the function expression to their corresponding entries in the optimization vector of parameters that will be later passed to it by the optimizer. Once this step is done, we get a reference function that can be invoked with concrete values in place of the PDI variables used. This function is then used in order to calculate $\pr{Z|C}$, by translating $X$ to $Z$ using the function we compiled and then modeling the resulting distribution as an $N$-dimensional histogram. From $\pr{Z|C}$ and the $\pr{C}$ the objective function $I(Z,C;R)$ can be calculated and its value is returned as the objective value.

Using the objective function described above, the engine runs optimizers from the optimization toolboxes of MATLAB (by default, the engine first runs a genetic algorithm, \myverb|ga()|, then uses its output to initialize a gradient based optimizer, \myverb|fmincon()|) in order to find a set of parameters (\myverb|PDI_variable|'s) that yield an optimal (minimum) mutual information measure between the differential information $Z$ and the class $C$. That is, the engine solves the optimization problem in~\autoref{eq:main} with respect to the set of PDI variables declared and under the constraints provided on them in the PDI problem definition by the user. \comment{This is done by first compiling the provided reference function from the user (that uses PDI variables) into a usable function that assumes concrete values for these variables and then using this altered version in order to provide the optimizer with an objective function that maps the original information to the differential information using the compiled reference function (with the current parameters provided by the optimizer) and outputs the mutual information $\m{I}{C,Z}$.} In the end, the engine substitutes the PDI variables declared by the user by the values found by the optimizer in the user's workspace (so that they become number matrices instead of objects of type \myverb|PDI_variable|).

Note again that in the current implementation, the distribution $\pr{Z|C}$ is modeled non-parametrically as a high-dimensional histogram using computed bins based on the original bins provided by the user when invoking \myverb|PDI_dimension| and the reference function provided by the user. This approach clearly suffers from the curse of dimensionality but serves as a simple first implementation for a proof of concept. We discuss approaches to amend this problem in~\autoref{sec:future}.
}
\comment{
\subsection{Information Theory Background} \label{appendix:information_theory}
The Shannon entropy~\citep{shannon1948mathematical} and differential entropy.
\begin{definition2}[\citep{shannon1948mathematical}]
The \emph{Shannon entropy} of a discrete random variable $X$, denoted by $H(X)$, is defined by $H(X) = \expec{-\log p(x)}$.
\end{definition2}

\newcommand{\diffentrpy}{\citep[Definition 8.1]{cover2006elements}}
\begin{definition2}[\diffentrpy]
The \emph{Differential entropy} of a continuous random variable $X$ with density $f(x)$, denoted by $h(X)$, is defined by $h(X) = \expec{-\log f(x)}$.
\end{definition2}

\newcommand{\kldist}{\citep[c.f. Definition 8.46]{cover2006elements}}
\begin{definition2}[\kldist]
Let $P$ and $Q$ be two probability measures such that $P$ is absolutely continuous with respect to $Q$. The \emph{relative entropy} (or \emph{Kullback-Leibler distance}) $D_{KL}(P||Q)$ between two probability measures $P$ and $Q$ is defined by $D_{KL}(P || Q) = \int \log \frac{dP}{dQ} dP = E_P[\log \frac{dP}{dQ}]$
\end{definition2}

\subsection{Omitted Proofs}
\begin{proof}[Proof of~\cref{cor:diff_leading_to_optimal}]
It is sufficient to show that
$$I(Z,C|S;R) = \expec{D_{KL}\left( p(C|Z=z,S=s) || p(C|S=s) \right)}$$

We abuse notation and denote by $E_x[f(x)]$ the expectation of $f(x)$ w.r.t. the distribution $P_X(x)$ and by $E_{x|Y=y}[f(x,y)]$ the expectation of $f(x,y)$ w.r.t. the distribution $P_{X|Y=y}(x)$ . We write
\begin{align*}
& I(Z,C|S=s;R) = \\
= & E_s[D_{KL}(\pr{Z,C|S=s;R} || \pr{Z|S=s;R}\pr{C|S=s;R})] = \\
= & E_s[E_{z,c|S=s}[ - log \frac{\pr{Z=z,C=c|S=s;R}}{\pr{Z=z|S=s;R}\pr{C=c|S=s;R}}]] = \\
= & E_{z,c,s}[ - log \frac{\pr{Z=z,C=c|S=s;R}}{\pr{Z=z|S=s;R}\pr{C=c|S=s;R}}] = \\
= & E_{z,c,s}[ - log \frac{\pr{C=c|Z=z,S=s;R}}{\pr{C=c|S=s;R}}] = \\
= & E_{z,s}[E_{c|Z=z,S=s}[ - log \frac{\pr{C=c|Z=z,S=s;R}}{\pr{C=c|S=s;R}}]] = \\
= & E_{z,s}[ D_{KL}(\pr{C|Z=z,S=s;R}||\pr{C|S=s;R})]
\end{align*}
as requested.
\end{proof}
}
\comment{
\subsection{Notes}
\todo[Needs to be removed!]

\begin{enumerate}
\item Auxiliary data-sources are an issue in privacy. Example: Netflix prize~\citep{narayanan2006break}. \citeauthor*{adam1989security} call it ``supplementary knowledge"~\citeyearpar{adam1989security}.
\item Inference-sensitive scenarios is what we care about.
\item Collusion between different adversaries doesn't degrade the performance of the system.
\item Bias is one of the criteria that \citeauthor*{adam1989security} evaluated, the additive noise solutions to differential privacy seem to suffer from this problem~\citep{dwork2006differential,dwork2008differential}. For reference,~\citeauthor*{matloff1986another} shows the following result in non-interactive statistical databases (SDBs) with additive noise. If $X$ is the original value of a statistical query and $X` = X + \alpha$ is the perturbed value of the statistical query and if $X$ is a positive numerical variable with a strictly decreasing pdf then $\expec{X | X' = w} < w$~\citep{matloff1986another,adam1989security}. Although,~\citeauthor*{adam1989security} hint that the problem is less severe in output-perturbation methods (like interactive SDB, differential privacy)~\citeyearpar[Section 5]{adam1989security}.
\item \citeauthor*{matloff1986another} further shows the following bias result in non-interactive SDBs with additive noise. Let $X$ and $Y$ be correlated attributes with a bi-variate Gaussian distribution whose expected value is $0$. Furthermore, let $X' = X + \alpha$ be the perturbed value of attribute $X$ where $\alpha$ is an independent noise with mean $0$ and variance $Var(\alpha)$, then the following bias occurs. Let
\begin{align*}
\m{m}{w} &\triangleq \expec{Y|X=w} \\
&= \left( \expec{XY}/Var(X) \right) w \\
\m{m'}{w} &\triangleq \expec{Y|X'=w} \\
&= \left( \expec{X'Y}/Var(X') \right) w
\end{align*}

Since $\alpha$ is independent of $Y$ and both $Y$ and $\alpha$ have a mean of zero we get
\begin{align*}
\m{m'}{w} &= \frac{\expec{XY}}{Var(X')} w \\
&= \frac{\expec{XY}Var(X)}{Var(X')Var(X)} w \\
&= \frac{Var(X)}{Var(X')} \m{m}{w}
\end{align*}
or equivalently
\begin{align*}
\m{m'}{w} &= \frac{Var(X)}{Var(X + \alpha)} \m{m}{w} \\
&= \frac{Var(X)}{Var(X) + Var(\alpha)} \m{m}{w} \\
&= \frac{1}{1 + Var(\alpha)/Var(X)} \m{m}{w}
\end{align*}
which implies
\begin{equation}
\frac{\m{m'}{w}}{\m{m}{w}} = \frac{1}{1 + Var(\alpha)/Var(X)}
\end{equation}

Therefore, if $Var(\alpha) = Var(X)$, which is no unreasonable for perturbing noise, a bias of $50\%$ occurs~\citep{matloff1986another,adam1989security}. More conservatively, if $Var(\alpha) = \frac{1}{2} Var(X)$ then we get $\frac{\m{m'}{w}}{\m{m}{w}} = 66.667\%$ which is $33.333\%$ bias.
\item The small query-set problem (that is exact and partial compromise/disclosure problems)~\citep[Section 4.2.1]{adam1989security}. Does this also apply to differential privacy? According to \citeauthor*{adam1989security}, it might~\citeyearpar[Section 5.3.3, p. 543]{adam1989security}.
\item In their survey, \citeauthor*{adam1989security} consider the notion of ``confidential attribute(s)" but focus mostly on the problem of exact compromise/disclosure as opposed to partial compromise/disclosure as we are presenting in PDI~\citep{adam1989security}.
\item ``Computer security is not privacy protection"~\citep{sweeney2002k}
\end{enumerate}
}
\end{document}